\def\bE{\mathbb{E}}
\def\bR{\mathbb{R}}
\def\cH{\mathcal{H}}
\def\cR{\mathcal{R}}
\def\cT{\mathcal{T}}
\def\cX{\mathcal{X}}
\def \qed {\hfill \vrule height6pt width 6pt depth 0pt}
\def\bee{\begin{equation}}
\def\ene{\end{equation}}
\def\beq{\begin{eqnarray}}
\def\enq{\end{eqnarray}}
\newtheorem{defi}{Definition}
\newtheorem{lem}{Lemma}
\newtheorem{thm}{Theorem}
\newtheorem{rem}{Remark}
\newtheorem{exam}{Example}
\begin{document}

\title{Likelihood Ratio Based Scheduler for Secure Detection in Cyber Physical Systems\thanks{This work was partially
supported by National Natural Science Foundation of China under grant NSFC 61273233.}}

\author{Jian-Ya Ding,
        Keyou You,~\IEEEmembership{Member,~IEEE},  Shiji Song,~\IEEEmembership{Member,~IEEE}, Cheng Wu
\thanks{The authors are with the Department of Automation, Tsinghua University, Beijing, 100084, China. (e-mail:~ding-jy12@mails.tsinghua.edu.cn, \{youky, shijis, wuc\}@mail.tsinghua.edu.cn)}}
\maketitle

\begin{abstract}
This paper is concerned with a binary detection problem over a non-secure network. To satisfy the communication rate constraint and against possible cyber attacks, which are modeled as deceptive signals injected to the network, a likelihood ratio based (LRB) scheduler is designed in the sensor side to smartly select sensor measurements for transmission. By exploring the scheduler,  some sensor measurements are successfully retrieved  from the attacked data at the decision center. We show that even under a moderate communication rate constraint of secure networks, an optimal LRB scheduler can achieve a comparable asymptotic detection performance to the standard N-P test using the full set of measurements,  and is strictly better than the random scheduler. For non-secure networks, the LRB scheduler can also maintain the detection functionality but suffers graceful performance degradation under different attack intensities. Finally, we perform simulations to validate our theoretical results.

\end{abstract}

\begin{IEEEkeywords}
Binary detection, cyber attack, sensor scheduler, scheduled transmission rate, asymptotic detection performance.
\end{IEEEkeywords}
\IEEEpeerreviewmaketitle

\section{Introduction}

Cyber-Physical Systems (CPS) are intelligent systems which are typically  networked systems for interactions between physical processes and computing components. Computation and communication capabilities are deeply embedded in CPS to help detect, monitor and control physical entities in physical system remotely in a cooperative way. Examples of CPS include intelligent transportation systems, environmental monitoring networks, power distribution networks, healthcare monitoring systems, wireless sensor networks (WSNs) and etc. These applications are generally safety-critical since system failures due to malicious cyber attacks can result in disastrous breakdown to infrastructures' functionality and irreversible harm to human safety.

%
Distributed detection \cite{tenney1981detection} in WSNs plays an important role in network monitoring. A typical distributed detection sensing system consists of a decision center, a wireless communication link and a group of sensor nodes, which are deployed in the interested area for data collection.  WSNs are usually deployed in public or adversarial environment over non-secure channels \cite{alrajeh2013intrusion,bokareva2006wireless}, under various potential cyber attacks such as jamming \cite{li2013jamming}, selective forwarding \cite{yu2006detecting}, message manipulation\cite{he2013sats}, false data injection \cite{zhu2004interleaved} and routing attacks \cite{lu2005applying}. These attacks can induce deceptive false alarms in detection and lead to misjudging of the system state changes. For instance, an intruder may fabricate an equipment breakdown event in a process industry by sending deliberately forged signals to the detection center. Considerable economic loss will occur if the ongoing production process are terminated due to the non-existing equipment breakdown. To protect the system, secure communication mechanisms and technologies are highly desired.




Previous research concerning security issues in WSNs mainly focus on information authentication \cite{perrig2002spins,lu2012becan}, key encryption \cite{du2005pairwise,boneh2003identity}, data aggregation \cite{ozdemir2009secure,he2007pda}, and secure routing \cite{zhou1999securing,karlof2003secure}. These security measures are generally designed to defend against malicious attacks by preventing deceptive data sent or received by the sensors, controllers and actuators,  while the survivability under attacks is rarely discussed.   In practice, however, the defense mechanisms can often fail due to various unpredictable problems such as human errors, design defects, incorrect operations, and device misconfigurations. This indicates that attacks may not be resolved even when defense technology is employed in the system. As Cardenas et al  pointed out in \cite{cardenas2008secure}, it is highly desirable that the system proceed to function well or degrade gracefully even under attacks.


In this work,  we consider a detection framework in Fig. \ref{fig1} where the system and the remote tester are connected by a non-secure network. In particular, an attacker will inject a deceptive signal $q_i$ into the channel.  If the measurement $y_i$ is directly sent to the tester, the attacker renders it difficult for the tester to distinguish from the deceptive signal to make a correct decision.
\begin{figure}
  \centering
  \includegraphics[width=3.40in]{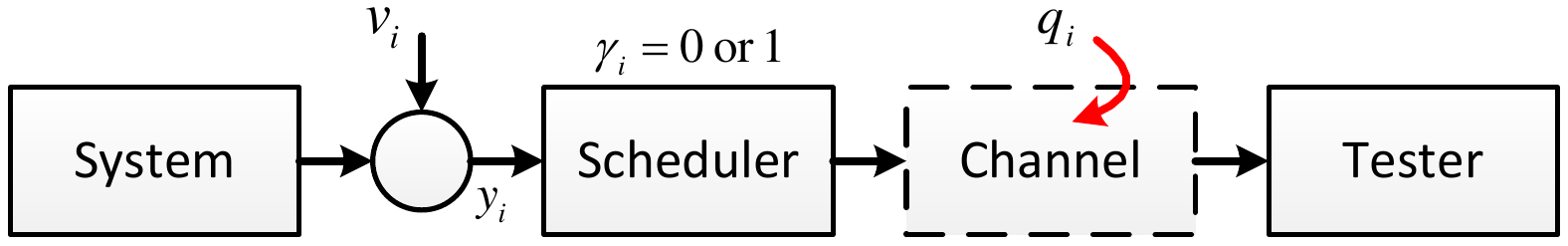}
    \caption{ Distributed detection framework under attack}
  \label{fig1}
\end{figure}
To solve it, we propose a communication scheduler in the transmitter side whose state is denoted by $\gamma_{i}$. For example, $\gamma_{i}=0$ corresponds to that $y_{i}$ will not be transmitted to the tester while $\gamma_{i}=1$ indicates that $y_{i}$  will be sent to the tester.  The purpose of such a scheduler is twofold. Firstly, the binary-valued $\gamma_i$ of the scheduler can be securely encrypted, and provides critical information for the tester to survive the ongoing cyber attack. Secondly,  it helps to select ``useful'' measurements for transmission, and discard the rest of ``non-useful'' ones. From this perspective, the scheduler reduces the energy consumption for sensors as only a subset of the sensor measurements will be transmitted to the tester.



Observe that the likelihood ratio of sensor measurements is key to the celebrated Neyman-Pearson (N-P) test, which is known to be an optimal test for the binary hypothesis testing problem under a full set of measurements \cite{casella2001statistical}. This suggests that the likelihood ratio might be a good candidate to quantify the importance of a sensor measurement. To pursue this idea, we propose a  likelihood ratio based (LRB) communication scheduler for a binary hypothesis testing problem. Specifically, only the sensor measurement with its likelihood ratio far from one will be useful for the detection problem and sent to the tester while the rest are deemed as non-useful.

Note that the tester will receive the transmitted measurements with the securely encrypted scheduler state $\gamma_i$ and possible deceptive signals. By exploring their relationships, the tester can extract certain critical information for detection. Without cyber attacks, {\em scheduled transmission rate} is introduced in \cite{you2013asymptotically} to quantify the reduction of communication cost by the scheduler, which is defined as the ratio of the number of transmitted measurements to the total number of sensor measurements in the average sense. Intuitively, the larger the scheduled transmitted rate, the larger the number of sensor measurements tends to be transmitted in a fixed time interval. Due to possible cyber attacks, a larger transmission rate does not always result in a better detection performance as it may also reduce the chance for the tester to discriminate the transmitted true measurements from the deceptive signals. We are therefore interested in the design of an optimal LRB scheduler to achieve a good detection performance under an appropriate transmission rate.  Since the scheduler is a nonlinear function of sensor measurements, it is challenging to design an (asymptotically) optimal LRB scheduler for detection.


Under a LRB scheduler, which is to be specified later, we explicitly derive the rejection regions of the null hypothesis respectively under N-P tests. Due to nonlinearity of the scheduler, it is impossible to analytically calculate the probabilities of either Type I or II errors. Thus, we resort to the asymptotic approach  to quantify the effect of the scheduler on the detection performance, and obtain optimal schedulers.  Note that sensor data scheduling  has been reported in \cite{rago1996censoring,appadwedula2005energy,appadwedula2008decentralized,maleki2011energy,ding2014likelihood} for detection problems and in \cite{shi2011optimal,shi2011sensor,wu2013event,you2013asymptotically} for state estimation problems, while none of these works consider security issues. Besides, they focus on establishing an optimal transmission strategy under different scenarios while the effect of sensor scheduling mechanism on the detection performance remains unexploited. In comparison, we conduct an asymptotic analysis as the number of sensor measurements tends to infinity. This enables us to clearly quantify how the scheduled transmission rate and the potential deceptive data degrades the asymptotic detection performance.

The contribution of this work can be summarized as follows. Firstly, inspired by the feature of an N-P test, we propose an LRB communication scheduler to reduce the communication cost for a binary detection problem, which is shown to be an efficient and simple strategy both in theory and simulation.
Secondly, under a scheduled transmission rate constraint, an asymptotically optimal LRB scheduler is designed to achieve the fastest exponential convergence of detection error.
Thirdly, using the scheduler information, we introduce tester signal processing protocols to maintain detection functionality even when deceptive attack signal are injected in the channel. In addition, the degradation of detection performance under a certain attack intensity is explicitly quantified in the asymptotical sense.


 The remainder of this paper is organized as follows. In Section II, we formulate the distributed detection problem and propose the LRB scheduler. In Section III, we explicitly develop the Neyman-Pearson test and conduct asymptotic performance analysis for the LRB scheduler. Section IV introduces signal processing protocols at the tester when cyber attack happens. Asymptotic analysis are conducted to exactly quantify performance degradation of detection under different attack intensities. In Section V, simulation results are included to validate our theoretical results. In Section VI, we draw some concluding remarks and highlight some future research directions.


\section{Problem Formulation}
Distributed detection through WSNs is threatened by potential cyber attacks, but works under friendly communication environment in the majority of time. We are particularly interested in designing a secure detection system that performs well under malicious environment.

\label{section2}
\subsection{Communication Scheduler for Secure Detection}
Consider a scenario that a remote sensor is deployed to monitor a phenomenon of physical interest, which is modeled as a binary state of the nature. The  sensor collects samples from the system and transmits them to a decision center through a non-secure network. In this paper,  the sensing signal is given by
\begin{equation}
y_{i}=\theta+v_{i},i=1,2,\ldots, \label{system}
\end{equation}
where the system state is represented by a binary-valued $\theta\in \{\theta_0, \theta_1\}\subset\bR$, and $v_{i}$ is a white Gaussian noise with variance $\sigma^2$.
It is obvious that it does not lose generality by letting $\theta_{0}<\theta_{1}$.

The task of the decision center is to discriminate between the two states using its received sensing information. To this purpose, a tester is implemented to solve the following hypothesis testing problem:
\bee
\label{prb_test}
\cH_{0}:\{\theta=\theta_{0}\}\quad\quad \text{versus}\quad\quad \cH_{1}:\{\theta=\theta_{1}\}.
\ene

In statistics, $\cH_0$ is called the null hypothesis and $\cH_1$ is the alternative hypothesis. Throughout the paper, all the random varibles/vectors are defined on a common probability space $(\Omega,\mathcal{F},P_\theta)$. In contrast to the conventional hypothesis testing problem \cite{casella2001statistical}, we focus on a distributed framework consisting of a communication scheduler, which is embedded in the sensor side,
a remote tester and a non-secure channel, see Fig. \ref{fig1} for an illustration. In particular, let $\gamma_{i}=1$ indicate that the sensor is triggered to transmit $y_i$ to the tester at sampling time $i$ and the packet containing the information of $y_{i}$ is perfectly delivered to the tester while $\gamma_{i}=0$ means that there is no data transmission between the sensor and the tester. That is, the scheduler will decide whether $y_{i}$ is to be transmitted or not. A natural question is how to design an efficient and easily implementable scheduler to achieve a good detection performance, even under the potential cyber attacks.

As in \cite{you2013kalman}, we do not consider other communication issues such as packet losses, transmission delay, data quantization, channel noise and etc. This implies that if there is no measurement sent from the sensor at time $i$, the tester does not receive anything from the channel and understands that $y_i$ is censored. Hence, the  information received by the tester at time $i$ is given by
\begin{equation}
z_{i}=\{\gamma_{i}y_{i},\gamma_{i}, q_i\}.
\end{equation}

For noisy channels, the binary-valued $\gamma_i$ can be available to the tester by sending merely one bit message from the sensor node. It should be noted that in light of \cite{you2013asymptotically} the main approach of this work can be generalized to address some channel medium effects, although we do not plan to do so for brevity.

Given an arbitrary scheduler $\{\gamma_i\}$, we adopt the so-called {\em scheduled transmission rate}\cite{you2013asymptotically} to roughly evaluate the average communication cost at time $N$ as follows
\begin{equation}\label{schedulerrate}
R_{\theta}=\frac{1}{N}\sum_{i=1}^{N}\mathbb{E}_{\theta}[\gamma_{i}],
\end{equation}
where the mathematical expectation $\bE_\theta[\cdot]$ is taken with respect to the probability measure $P_\theta$.
Intuitively, $R_\theta$ characterizes the frequency of measurement transmission in the average sense. Obviously, a larger scheduled transmission rate corresponds to a higher ratio of the number of transmitted measurements to the total number of sensor measurements.

Under the above detection framework, the objective of this work is as follows.
\begin{enumerate}
 \item Given a scheduled transmission rate constraint, design an efficient and practical scheduler in the sense of achieving a good detection performance.
 \item Design tester signal processing protocols to maintain system functionality when deceptive signals are injected into the system.
\item  Exactly quantify the effect of the scheduler and different attack intensities on the asymptotic detection performance.
\end{enumerate}


\subsection{LRB Scheduler}
Given a full set of sensor measurements $\mathbf{y}=\{y_{1},\ldots,y_{N}\}$,  recall that the rejection region for the null hypothesis of the standard Neyman-Pearson (N-P) test for a binary detection problem is expressed by
\begin{equation}
\{\mathbf{y}:f(\mathbf{y}|\theta_{1})>kf(\mathbf{y}|\theta_{0})\},\label{eq:rej}
\end{equation}
where  $f(\mathbf{y}|\theta_{j})$ denotes the joint probability density function (pdf) of ${\bf y}$ with respect to $\theta\in\{\theta_0, \theta_1\}$ and $k$ is a decision threshold. It is well known that
the N-P test is a uniformly most powerful test for the detection problem (\ref{prb_test}) under the full set of measurements ${\bf y}$.

Since ${y}_i$ is a sequence of independently and identically distributed (i.i.d.) Gaussian random variables,
the rejection region in (\ref{eq:rej}) can be explicitly expressed as
\begin{equation}\label{rejection}\left\{ {{\bf{y}}:\prod\limits_{i = 1}^N {\frac{{{p_{{\theta _1}}}({y_i})}}{{{p_{{\theta _0}}}({y_i})}}}  > k} \right\},
\end{equation}
where $p_{\theta}(y_{i})=\mathcal{N}(y_{i};\theta,\sigma^{2})$ and $\mathcal{N}(y_{i};\theta,\sigma^{2})$ is the pdf of a Gaussian random variable with mean $\theta$ and variance $\sigma^{2}$.

As the rejection decision is mainly decided by the likelihood ratio, we resort to the likelihood ratio function to quantify the importance of a sensor measurement as follows
\begin{equation}
g(y|\theta_{0},\theta_{1})=\frac{p_{\theta_{1}}(y)}{p_{\theta_{0}}(y)}.
\end{equation}

In fact, if $g(y_i|\theta_{0},\theta_{1})$ is far from one, the measurement $y_i$ will lead to a significant change of the likelihood ratio, and eventually affects the tester decision. Thus, we expect that $y_{i}$ associated with a likelihood ratio far from one may contain useful information for detection, and a measurement with its likelihood ratio close to one is regarded as non-useful or uninformative.

Taking advantage of the above observation and noting that $y_i$ is an i.i.d. random sequence, we propose a likelihood ratio based (LRB) scheduler of the following form:
\begin{equation}
\gamma_{i}=\left\{
\begin{array}{rcl}
0,       &      &\text{if  } { {1}/{\alpha}<g(y_{i} | \theta_{0},\theta_{1})<\alpha};\\
1,     &      & \text{otherwise,}
\end{array} \right.  \label{LRB scheduler}
\end{equation}
where the scheduling parameter $\alpha>1$ and is to be designed. Informally speaking, the scheduler helps the sensor to smartly decide whether a sensor measurement is useful or not.  To reduce the number of sensor communication, only useful measurements will be transmitted to the tester through a non-secure network and discard the rest of measurements.

The LRB scheduler can be further simplified. For instance, let the parameters $a$ and $b$ satisfy that
\begin{equation}
g(a| \theta_{0},\theta_{1})=\frac{1}{\alpha} \quad\text{ and }\quad g(b| \theta_{0},\theta_{1})=\alpha.
\end{equation}
It follows that
\begin{equation*}
\begin{split}
1&=\frac{1}{\alpha}\cdot\alpha =g(a| \theta_{0},\theta_{1})\cdot g(b| \theta_{0},\theta_{1})\\
&=\exp({[(\theta _1 - \theta _0)/{\sigma ^2}]\cdot[{a+b-(\theta_{0}+\theta_{1})}]}),
\end{split}
\end{equation*}
which implies that
\begin{equation}
\label{Sumofab}
a+b=\theta_{0}+\theta_{1}.
\end{equation}
We also notice that
\begin{equation*}
\begin{split}
g(y|{\theta _0},{\theta _1}) &= \frac{{{p_{{\theta _1}}}(y)}}{{{p_{{\theta _0}}}(y)}} = \frac{{{\exp({ - {{(y - {\theta _1})}^2}/2{\sigma ^2}}}})}{{{\exp({ - {{(y - {\theta _0})}^2}/2{\sigma ^2}}}})}\\
&= {\exp({y({\theta _1} - {\theta _0})/{\sigma ^2}})}{\exp({(\theta _0^2 - \theta _1^2)/2{\sigma ^2}})}.
\end{split}
\end{equation*}
It is clear that $g(y|{\theta _0},{\theta _1})$ is an increasing function in $y$. Thus the LRB scheduler in (\ref{LRB scheduler}) can be explicitly rewritten as follows
\begin{equation}
\label{lrbscheduler}
\gamma_{i}=\left\{
\begin{array}{rcl}
0,       &      &\text{if  } {a<y_{i}<b};\\
1,     &      & \text{otherwise,}
\end{array} \right.
\end{equation}
where $a<b$ and $a+b=\theta_{0}+\theta_{1}$. See Fig. 2 for an illustration, which indicates that $y_i$ close to the center of $\theta_0$ and $\theta_1$ is not useful for the detection problem. This is consistent with our intuition as it is usually difficult to discriminate whether $y_i$ is sampled under $\cH_0$ or $\cH_1$ when $y_i$ is close to the center of $\theta_0$ and $\theta_1$. From this point of view, it explains the motivation of developing the LRB scheduler.

\begin{figure}[htbp!]
  \centering
  \includegraphics[width=3.4in]{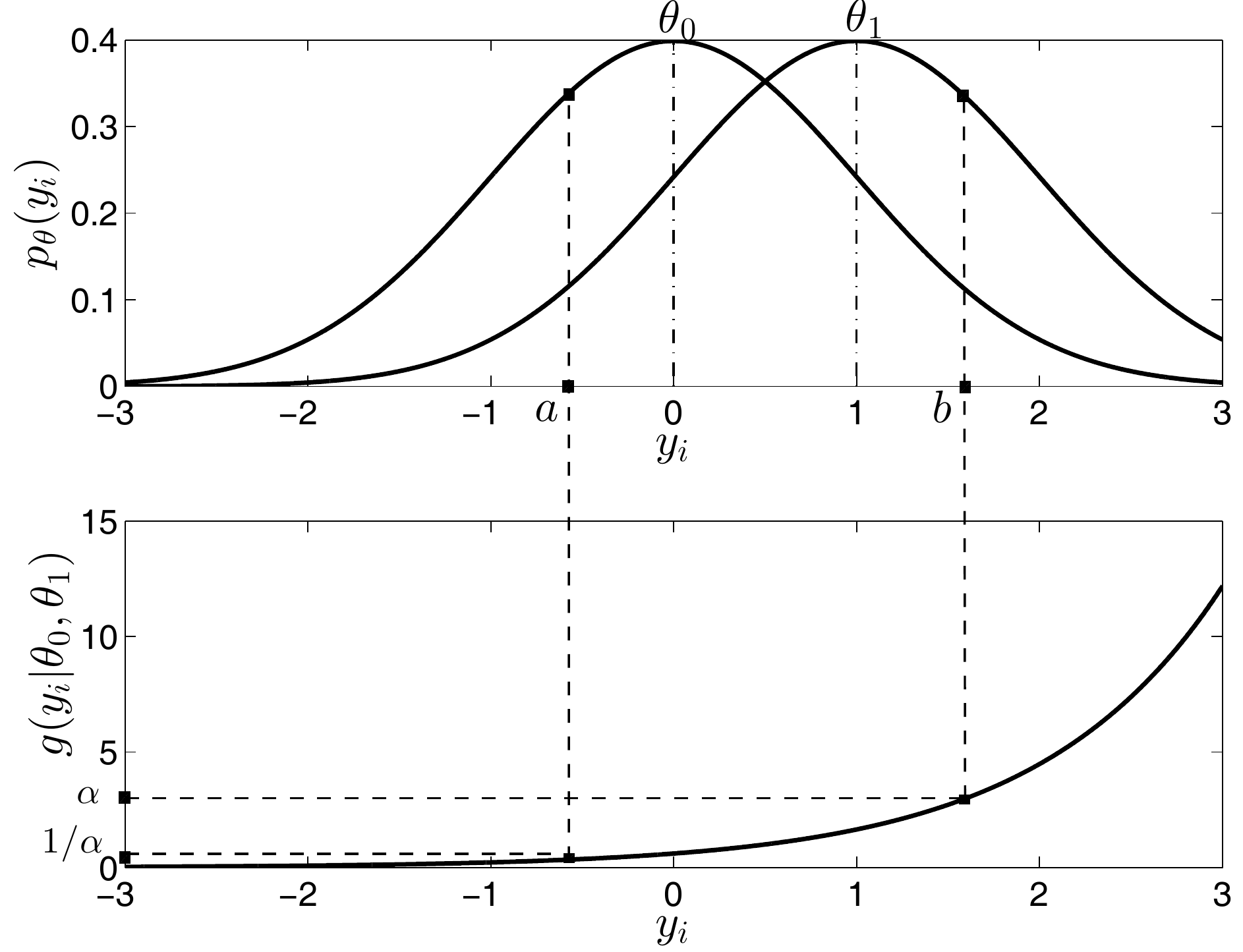}
    \caption{Scheduler thresholds}
  \label{fig2}
\end{figure}

\subsection{Random Scheduler}
If the sensor is not smart enough, it randomly selects a measurement and transmits it to the tester. To investigate this situation, we also consider a random scheduler for the detection problem by letting
\begin{equation}\label{randomscheduler}
{\gamma _i} = \left\{ \begin{array}{l}
1,{\quad\text{with~a~probability}}\;{p;}\\
0,{\quad\text{with~a~probability}} \;1 - {p},
\end{array} \right.
\end{equation}
and the scheduling sequence $\gamma_i$ is an i.i.d. process, which is independent of sensor measurements.

 In comparison with the random scheduler, $\gamma_i$ of the LRB scheduler is a deterministic function of the sensor measurement $y_i$.  If $y_i$ is not transmitted by the sensor, the tester deduces that $a<y_i<b$, which can also be utilized to improve the detection performance. Thus, it is expected that the LRB scheduler may achieve a better detection performance under the same scheduled transmission rate. We shall formally confirm it in the sequel.

\section{Performance Analysis under Normal Conditions}
\label{sec_npt}

In this section, we focus on the LRB scheduler design under secure communication channel, while the case under cyber attacks  is discussed in Section \ref{section4}.
\subsection{Neyman-Pearson Test}
We explicitly establish an N-P test under the scheduled communication in secure communication environment.  To this end, define a rejection region of the null hypothesis for the N-P test as
\begin{equation*}
\cR=\{\textbf{z}:f(\textbf{z}|\theta_{1})>kf(\textbf{z}|\theta_{0})\},
\end{equation*}
and an
acceptance region of $\cH_0$ in favor of $\cH_1$ as
\begin{equation*}
\cR^{c}=\{\textbf{z}:f(\textbf{z}|\theta_{1})<kf(\textbf{z}|\theta_{0})\},
\end{equation*}
where $f(\textbf{z}|\theta)$ is the joint pdf\footnote{Strictly speaking, it is not very accurate to use the pdf of ${\bf z}$ due to that the random vector ${\bf z}$ is evidently not continuous. To capture the essence of  our idea,  we state the pdf of ${\bf z}$ throughout this paper rather than adopt the probability measure.} of $\textbf{z}:=\{z_{1},\ldots, z_{N}\}$ corresponding to $\theta\in\{\theta_0,\theta_1\}$.

It follows from Neyman-Pearson lemma \cite{casella2001statistical} that the above test is a uniformly most powerful (UMP) test. That is, if we restrict to the class of all level $\alpha$ tests (control the probability of Type I error as at most $\alpha$), the N-P test would lead to the smallest probability of Type II error.  Note that a Type I error (or error of the first kind) is the incorrect rejection of a true null hypothesis, and the probability of the Type I error is conceptually given by $P_{\theta_0}(\cR)$. While a Type II error (or error of the second kind) is the failure to reject a false null hypothesis, and its probability is denoted by $P_{\theta_1}(\cR^c)$.

\subsubsection{N-P Test Under a LRB Scheduler}
Under the LRB scheduler (\ref{lrbscheduler}), the information received by the tester at time $N$ is written as  ${\bf{z}} = \{ {z_1}, \ldots ,{z_N}\}$, whose pdf  is given by
\beq
\label{pdf}
f(\textbf{z}|\theta)&=&\prod\limits_{i = 1}^N {{p_\theta }({z_i})} \nonumber\\
& =& \prod\limits_{i = 1}^N {{{\left[{{p_{T}}({y_i}|\theta ,\sigma^2 ,a,b)}  \right]}^{{\gamma _i}}}{{\left[ {{P}_\theta\{ {\gamma _i} = 0\} } \right]}^{1 - {\gamma _i}}}},\quad
\enq
where $p_{T}(\cdot)$ is the function of the truncated Gaussian pdf that lies within the set $U= ( - \infty ,a] \cup [b,\infty )$, and is  expressed as
\begin{equation}
\label{pT}
{p_{T}}({y}|\theta ,\sigma^2 ,a,b) = {{\frac{1}{\sigma }\varphi \left(\frac{{{y} - \theta }}{\sigma }\right)}}{ \cdot I_{U}(y)},
\end{equation}
where $I_{U}(y)$ is an indicator function, and $\varphi(\cdot)$ is the pdf of a standard Gaussian random variable, i.e.,
\begin{equation*}
\begin{split}
\varphi(t)&=\frac{1}{\sqrt{2\pi}}\exp(-t^{2}/2);\\
{I_U}(y) &= \left\{ \begin{array}{l}
1,{\quad\rm{ if}}\; y \in U,\\
0,{\quad\rm{ else}}.
\end{array} \right.
\end{split}
\end{equation*}


Then, the null hypothesis will be rejected if the following condition is satisfied
\begin{equation}
\label{rejectregion}
\prod_{i=1}^N \left(\frac{p_T(y_i|\theta_1,\sigma^2,a,b)}{p_T(y_i|\theta_0,\sigma^2,a,b)}\right)^{\gamma_i}\left(\frac{P_{\theta_1}\{\gamma_i=0\}}{P_{\theta_0}\{\gamma_i=0\}}\right)^{1-\gamma_i}> k
\end{equation}
where $k\in\bR$ is to be designed to satisfy a given performance requirement. To simplify the notation, let
\begin{equation*}
\begin{gathered}
\Phi(x)=\int_{-\infty}^{x}\varphi(t)dt;\hfill\\
z^1(\theta ) = \frac{{{a} - \theta }}{\sigma },
z^2(\theta ) = \frac{{{b} - \theta }}{\sigma }.\hfill\\
\end{gathered}
\end{equation*}
The probability of event $\{\gamma_{i}=0\}$ can be easily obtained as
\begin{equation}
\label{Pgamma}
{P}_{\theta}\{\gamma_{i}=0\}=\Phi(z^2(\theta ))-\Phi(z^1(\theta )).
\end{equation}
By (\ref{pT}) - (\ref{Pgamma}), we shall reject $\cH_{0}$ if
\begin{eqnarray}
\label{rejectlrb}
&&  \frac{{ - (\theta _1^2 - \theta _0^2)}}{{2{\sigma ^2}}} \cdot \sum\limits_{i = 1}^N {{\gamma_{i}}}  + \frac{{({\theta _1} - {\theta _0})}}{{{\sigma ^2}}}\sum\limits_{i = 1}^N {{\gamma_{i}y_i}}  \nonumber \\
&& >\ln k + (N - \sum\limits_{i = 1}^N {{\gamma _i}} )\ln\left( {\frac{{\Phi (z^2({\theta _0})) - \Phi (z^1({\theta _0}))}}{{\Phi (z^2({\theta _1})) - \Phi (z^1({\theta _1}))}}} \right)\nonumber\\
&&  =\ln k,
\end{eqnarray}
where $\ln(\cdot)$ is the natural logarithm function, and the second term in the last inequality becomes zero.

\subsubsection{N-P Test under a Random Scheduler}
The joint pdf of ${\bf{z}} = \{ {z_1}, \ldots, {z_N}\}$ with respect to $\theta$ under a random scheduler (\ref{randomscheduler}) is given by
\begin{eqnarray}
f({\bf{z}}| \theta ) = \prod\limits_{i = 1}^N {{p_\theta }({z_i})}  = \prod\limits_{i = 1}^N {{{\left[ {p \cdot {\cal N}(y_{i}|\theta ,\sigma^2 )} \right]}^{{\gamma _i}}}{{\left[ {1 - p} \right]}^{1 - {\gamma _i}}}}.
\end{eqnarray}
Similarly, we shall reject $\cH_{0}$ in favor of $\cH_1$ if the following condition is satisfied
\begin{equation}
\label{rejectrandom}
  \frac{{ - (\theta _1^2 - \theta _0^2)}}{{2{\sigma ^2}}} \cdot \sum\limits_{i = 1}^N {{\gamma_{i}}}  + \frac{{({\theta _1} - {\theta _0})}}{{{\sigma ^2}}}\sum\limits_{i = 1}^N {{\gamma_{i}y_i}}  >  \ln k.
\end{equation}
\begin{rem}
In light of (\ref{rejectlrb}) and (\ref{rejectrandom}), it is interesting that the rejection conditions of both schedulers are of the same form. This means that there is no need for the tester to know which type of scheduler is used in the sensor node. Note that $\gamma_i$ in (\ref{rejectlrb}) is completely determined by $y_i$ while it is independent of ${\bf y}$ in  (\ref{rejectrandom}). This will result in different detection performance.

For the LRB scheduler in  (\ref{LRB scheduler}),  it may not be possible to explicitly calculate the probabilities of Type I and II errors  under a given parameter $k$, although numerical algorithms can be used.  However, both probabilities will decrease exponentially to zero as the number of sensor measurements tends to infinity.  In the sequel, we shall quantify this convergence speed.
\end{rem}
\subsection{Asymptotic Detection Performance}
\label{sec_atp}
We  design an optimal LRB scheduler in the sense of achieving the fastest convergence of the probability of Type II error under a scheduled transmission rate constraint, and that the probability of Type I error is controlled below a certain level. Note that it is impossible to simultaneously minimize the probabilities of both Type I and II errors \cite{casella2001statistical}.

\subsubsection{Relative Entropy and Stein's Lemma}
We recall the Stein's lemma \cite{cover2006eit} that the best error exponent for the exponential convergence of the probability of Type II error is determined by the relative entropy (Kullback-Leibler divergence) from the null to alternative hypotheses.

 Given two probability measures $P_1$ and $P_0$ over a set $\cX$, and $
P_0$ is absolutely continuous with respect to $P_1$, the relative entropy $D$ from $P_0$ to $P_1$ is defined by
\begin{equation}
\label{kldistance}
D({P_0} || {P_1})=\int_\cX\ln\left(\frac{dP_0}{dP_1} \right)dP_0,
\end{equation}
where $dP_0/dP_1 $ is the Radon-Nikodym derivative \cite{ash2000pam} of $P_0$ with respect to $P_1$, provided that the expression on the right-hand side of (\ref{kldistance}) exists. In fact, $D(P_0||P_1)$ characterizes the distance between two probability measures, which is a key factor to deciding the asymptotic detection performance.

\begin{lem}(Stein's lemma) \label{lem_chernoff}Let $X_{1},X_{2},\ldots,X_{n}$ be an i.i.d. random sequence under the probability measure $Q$ over a set $\cX$. Consider a binary hypothesis testing problem
\bee
\label{hypotest}
\cH_{0}:~Q=P_0~\text{ versus}~ ~\cH_{1}:~Q=P_1,
\ene where $D(P_{0} \lVert  P_{1} )<\infty$. Let $\cR_{n}\subseteq \cX^{n}$ be a rejection region of the null hypothesis $\cH_{0}$ obtained by the N-P test. Let the probabilities of Type I and II errors be
\begin{equation}
\alpha_{n}=P_{0}^{n}(\cR_{n})~\text{and}~ \beta_{n}=P_{1}^{n}(\cR_{n}^c),
\end{equation}
where $$P_{j}^{n}(\cdot)=\underbrace{P_{j}\times \cdots \times P_{j}(\cdot)}_{n},$$ is the product measure \cite{ash2000pam}.

For $0<\epsilon<\frac{1}{2}$, define
\begin{equation}
\beta_{n}^{\epsilon}=\mathop {\min }\limits_{{\cR_n} \subseteq {\cX ^n}, {\alpha _n} < \epsilon } {\beta _n}.
\end{equation}
Then it follows that
\begin{equation}
\mathop {\lim }\limits_{n \to \infty } \frac{1}{n}\ln \beta _n^\epsilon  =  - D({P_0} \lVert {P_1}).
\end{equation}
\end{lem}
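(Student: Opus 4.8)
The plan is to establish the result by the classical typical-set argument, proving the matching upper (achievability) and lower (converse) bounds on $\frac{1}{n}\ln\beta_n^\epsilon$ separately and then combining them. First I would introduce the normalized log-likelihood ratio $L_n=\frac{1}{n}\sum_{i=1}^{n}\ln\frac{dP_0}{dP_1}(X_i)$. Since the $X_i$ are i.i.d.\ and $D(P_0\|P_1)<\infty$, the strong law of large numbers gives $L_n\to D(P_0\|P_1)$ almost surely under $P_0$, hence in probability. Writing $D=D(P_0\|P_1)$ and fixing $\delta>0$, I would then define the typical set $A_n(\delta)=\{x^n:|L_n-D|<\delta\}$, which by construction satisfies $P_0^n(A_n(\delta))\to 1$ and on which the pointwise bound $e^{n(D-\delta)}\le \frac{dP_0^n}{dP_1^n}\le e^{n(D+\delta)}$ holds.

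For the achievability bound I would take the acceptance region of $\cH_0$ to be $A_n(\delta)$, i.e.\ use the rejection region $\cR_n=A_n(\delta)^c$. Then the Type I error $\alpha_n=P_0^n(A_n(\delta)^c)\to 0$, so $\alpha_n<\epsilon$ for all large $n$, while the Type II error is $\beta_n=P_1^n(A_n(\delta))$. Integrating the pointwise inequality $dP_1^n\le e^{-n(D-\delta)}dP_0^n$ over $A_n(\delta)$ yields $\beta_n\le e^{-n(D-\delta)}P_0^n(A_n(\delta))\le e^{-n(D-\delta)}$, so $\limsup_n\frac{1}{n}\ln\beta_n^\epsilon\le -(D-\delta)$; letting $\delta\downarrow 0$ gives the upper bound.

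For the converse I would take an \emph{arbitrary} rejection region $\cR_n$ with $\alpha_n<\epsilon$ and lower-bound its Type II error by restricting to the typical set: $\beta_n=P_1^n(\cR_n^c)\ge P_1^n(\cR_n^c\cap A_n(\delta))$. On $A_n(\delta)$ we have $dP_1^n\ge e^{-n(D+\delta)}dP_0^n$, so $\beta_n\ge e^{-n(D+\delta)}P_0^n(\cR_n^c\cap A_n(\delta))$. Since $P_0^n(\cR_n^c\cap A_n(\delta))\ge P_0^n(A_n(\delta))-P_0^n(\cR_n)\ge P_0^n(A_n(\delta))-\epsilon$ and $P_0^n(A_n(\delta))\to 1$, this probability exceeds $1-2\epsilon$ for large $n$. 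Thus $\beta_n\ge(1-2\epsilon)e^{-n(D+\delta)}$ uniformly over all admissible $\cR_n$, giving $\liminf_n\frac{1}{n}\ln\beta_n^\epsilon\ge -(D+\delta)$; letting $\delta\downarrow 0$ completes the converse, and combining the two halves yields $\lim_n\frac{1}{n}\ln\beta_n^\epsilon=-D(P_0\|P_1)$.

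The main obstacle I anticipate is the converse: unlike the achievability half, which analyzes a single explicit test, the converse must hold simultaneously for every rejection region obeying the Type I constraint, and the crucial step is the inclusion estimate $P_0^n(\cR_n^c\cap A_n(\delta))\ge P_0^n(A_n(\delta))-\epsilon$ that couples an arbitrary $\cR_n$ to the fixed typical set. A secondary technical point is that, because the $X_i$ need not be discrete (indeed they are continuous Gaussians in our application), every ``counting'' step must be phrased measure-theoretically through the Radon--Nikodym derivative $dP_0^n/dP_1^n$ rather than as a sum over atoms, but this changes only the bookkeeping and not the structure of the argument.
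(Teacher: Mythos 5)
The paper does not prove this lemma at all --- it is quoted as a known result with a citation to Cover and Thomas, and the argument you give is precisely the standard typical-set proof from that reference. Your proposal is correct, including the key points (the measure-theoretic phrasing via $dP_0^n/dP_1^n$, the uniform-in-$\cR_n$ converse bound $P_0^n(\cR_n^c\cap A_n(\delta))\ge P_0^n(A_n(\delta))-\epsilon$, and the use of $\epsilon<\tfrac12$ to keep $1-2\epsilon>0$); the only step worth making explicit is that $D(P_0\lVert P_1)<\infty$ together with the elementary bound $\bE_{P_0}\bigl[(\ln\tfrac{dP_1}{dP_0})_+\bigr]\le\bE_{P_0}\bigl[\tfrac{dP_1}{dP_0}\bigr]\le 1$ gives the full integrability needed to invoke the strong law for $L_n$.
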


Lemma \ref{lem_chernoff}  quantifies the asymptotic performance of the N-P test. Specifically, if the probability of Type I error is controlled to be less than $\epsilon$, the number of samples required to guarantee the probability of Type II error below a certain level $\delta<1$ is approximately given by
\begin{equation}
n \approx \frac{1}{{D({P_0}\lVert{P_1})}}\ln \left( {\frac{1}{\delta }} \right).
\end{equation}

To achieve a given detection performance, the relative entropy thus can approximately evaluate the number of samples required for discriminating between the null and alternative hypotheses. The larger the relative entropy is, the smaller the number of samples is required to discriminate the null and alternative hypotheses by using the N-P test.

Hence it is sensible to maximize the relative entropy $D({P_0} || {P_1})$ to achieve an asymptotically optimal hypothesis test in the sense of minimizing the probability of the Type II error. We shall design the optimal LRB scheduler to maximize the relative entropy from $\cH_0$ to $\cH_1$.\\

\subsubsection{Optimal LRB Scheduler}
In view of (\ref{lrbscheduler}), the relative entropy for the binary detection problem (\ref{prb_test}) is a function of scheduling parameters $a$ and $b$. Thus, we rewrite $D({P_{\theta_0}}\lVert{P_{\theta_1}})$ under the scheduler  (\ref{lrbscheduler}) as
\begin{equation}\label{reentropy}
L_{1}(a,b)=D({P_{\theta_0}}\lVert{P_{\theta_1}}).
\end{equation}
\begin{defi}
Given a scheduled transmission rate constraint, i.e., $R_\theta\leqslant R$, the optimal LRB scheduler for the N-P test is solved by the following constrained optimization problem
\begin{eqnarray*}
&\text{maximize} & L_{1}(a,b)\\
&\text{subject to}& R_{\theta}\leqslant R ,\quad \theta\in\{\theta_0, \theta_1\},\\
&&a+b=\theta_0+\theta_1,\quad a\leqslant b.
\end{eqnarray*}
\end{defi}
Note that $R_\theta$ is introduced in (\ref{schedulerrate}). Since $y_i$ is an i.i.d. random sequence, it follows that $\gamma_i$ is also i.i.d. This implies that $R_\theta=\bE_\theta[\gamma_1]=P_\theta\{\gamma_1=1\}$.

The following theorem gives an optimal LRB scheduler and quantifies its asymptotic detection performance.

\begin{thm}\label{thm_optimal}
Given a scheduled transmission rate constraint, i.e., $R_\theta\leqslant R$, the relative entropy from $\cH_0$ to $\cH_1$ under the LRB scheduler (\ref{lrbscheduler}) is given by
\beq
L_{1}(a^{*},b^{*}) &=& \frac{{(\theta _1^2 - \theta _0^2)R}}{{2{\sigma ^2}}} - \frac{{{\theta _1} - {\theta _0}}}{{{\sigma ^2}}}\nonumber\\
&&\times\int_{( - \infty ,a^{*}] \cup [b^{*},\infty )}^{} {y{p_{{\theta _0}}}(y)}dy
\enq
where the scheduling thresholds $a^{*}$ and $b^{*}$ satisfy that
\begin{eqnarray}
\label{constraint11}
R_{\theta_0}&=& 1-\int_{\frac{{a^{*} - {\theta _0}}}{\sigma }}^{\frac{{b^{*} - {\theta _0}}}{\sigma }} {\varphi (t)dt}  =R,\\
\label{constraint12}R_{\theta_1}&=& 1-\int_{\frac{{a^{*} - {\theta _1}}}{\sigma }}^{\frac{{b^{*} - {\theta _1}}}{\sigma }} {\varphi (t)dt}  =R.
\end{eqnarray}
\end{thm}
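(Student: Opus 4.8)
The plan is to first evaluate $L_1(a,b)=D(P_{\theta_0}\lVert P_{\theta_1})$ in closed form for an arbitrary feasible pair $(a,b)$, and then to argue that the transmission-rate constraint is active at the optimum, which forces (\ref{constraint11})--(\ref{constraint12}) and yields the stated value of $L_1(a^*,b^*)$. First I would exploit the i.i.d. structure to reduce $L_1$ to a single-sample relative entropy, and split the single-sample log-likelihood ratio obtained from (\ref{pdf}) into the transmitted branch ($\gamma_i=1$) and the censored branch ($\gamma_i=0$). The key preliminary observation is that the symmetry $a+b=\theta_0+\theta_1$ gives $z^1(\theta_0)=-z^2(\theta_1)$ and $z^2(\theta_0)=-z^1(\theta_1)$; combined with $\Phi(-x)=1-\Phi(x)$ and (\ref{Pgamma}) this shows $P_{\theta_0}\{\gamma_i=0\}=P_{\theta_1}\{\gamma_i=0\}$, so the censored branch contributes nothing to the relative entropy (the same cancellation that makes the second term of (\ref{rejectlrb}) vanish) and moreover $R_{\theta_0}=R_{\theta_1}$ automatically.

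On the transmitted branch the ratio $p_T(y|\theta_0)/p_T(y|\theta_1)$ equals $p_{\theta_0}(y)/p_{\theta_1}(y)$ on $U=(-\infty,a]\cup[b,\infty)$, whose logarithm is $\frac{\theta_1^2-\theta_0^2}{2\sigma^2}-\frac{(\theta_1-\theta_0)y}{\sigma^2}$. Taking $\bE_{\theta_0}$ over $y\in U$ and using $\int_U p_{\theta_0}(y)\,dy=R_{\theta_0}$ gives $L_1(a,b)=\frac{(\theta_1^2-\theta_0^2)R_{\theta_0}}{2\sigma^2}-\frac{\theta_1-\theta_0}{\sigma^2}\int_U y\,p_{\theta_0}(y)\,dy$, i.e. the claimed formula but with $R_{\theta_0}$ in place of $R$; it remains to show the constraint binds. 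To this end I would use the symmetry to parametrize $a=c-w$, $b=c+w$ with $c=(\theta_0+\theta_1)/2$ and $w\ge0$, which collapses the two rate constraints into $R_{\theta_0}(w)\le R$; since $R_{\theta_0}(w)$ is continuous and strictly decreasing from $1$ to $0$, the feasible set is $w\ge w^*$ with $R_{\theta_0}(w^*)=R$.

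The hard part is the monotonicity of $L_1$ in $w$. Differentiating under the integral sign and using $\theta_1^2-\theta_0^2=2c(\theta_1-\theta_0)$, the boundary terms combine to $\frac{dL_1}{dw}=\frac{(\theta_1-\theta_0)w}{\sigma^2}\big[p_{\theta_0}(c+w)-p_{\theta_0}(c-w)\big]$. Since $\theta_0<\theta_1$ gives $c>\theta_0$, the point $c+w$ is strictly farther from the mean $\theta_0$ than $c-w$ for every $w>0$, so $p_{\theta_0}(c+w)<p_{\theta_0}(c-w)$ and $L_1$ is strictly decreasing in $w$.

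Hence $L_1$ is maximized at the smallest feasible half-width $w=w^*$, so the rate constraint is active, $R_{\theta_0}=R_{\theta_1}=R$, which is exactly (\ref{constraint11})--(\ref{constraint12}); substituting $R_{\theta_0}=R$ into the formula above gives $L_1(a^*,b^*)$. The main obstacle is thus getting the derivative and its sign right --- carefully tracking both boundary contributions and recognizing the Gaussian's asymmetry about $c$ (which relies precisely on $\theta_0<\theta_1$) --- since this is what certifies that transmitting as much as the budget allows strictly improves the error exponent.
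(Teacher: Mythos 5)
Your proposal is correct and follows essentially the same route as the paper's proof: the same single-sample KL computation, the same symmetry argument (via $a+b=\theta_0+\theta_1$) showing the censored branch contributes nothing, and the same monotonicity-of-$L_1$ argument to conclude the rate constraint is active --- the paper simply differentiates with respect to $a$ (showing $\partial L_1/\partial a>0$) where you differentiate with respect to the half-width $w$ (showing $dL_1/dw<0$), which are equivalent since $da/dw=-1$. Your derivative formula and its sign check out against the paper's.
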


\begin{proof}
See Appendix A.
\end{proof}

\begin{rem}
For the conventional detection problem using the full set of sensor measurements, which results in that $R=1$, it follows from Theorem \ref{thm_optimal} that $a^{*}=b^{*}$. This implies that
\begin{equation}
\begin{split}
L_{1}^{*}&= \frac{{(\theta _1^2 - \theta _0^2)}}{{2{\sigma ^2}}} - \frac{{{\theta _1} - {\theta _0}}}{{{\sigma ^2}}}\int_{ - \infty }^\infty  {y{p_{{\theta _0}}}(y)} dy\\
&= \frac{{{{({\theta _1} - {\theta _0})}^2}}}{{2{\sigma ^2}}}.\\
\end{split}
\end{equation}
\end{rem}

\subsubsection{Performance Analysis}
Now, we evaluate the asymptotic performance of the optimal LRB scheduler. Let us consider the random scheduler in (\ref{randomscheduler}).
\begin{lem}
\label{lem_rdnscheduler}
Given a scheduled transmission rate constraint, i.e., $R_\theta\leqslant R$, the maximum relative entropy under the random scheduler in (\ref{randomscheduler}) is given by
\begin{equation}
L_{2}=D({P_{\theta_0}}\lVert{P_{\theta_1}}) = \frac{{{{({\theta _1} - {\theta _0})}^2}}R}{{2{\sigma ^2}}}.
\end{equation}
\end{lem}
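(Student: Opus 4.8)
The plan is to reduce the $N$-sample relative entropy to a single-observation computation and then optimize the free parameter $p$ against the rate constraint. First I would exploit the i.i.d.\ structure: since $\{z_i\}$ are i.i.d.\ under each $P_\theta$, the relative entropy tensorizes over the product measure, so by the same per-sample normalization used in Stein's lemma (Lemma~\ref{lem_chernoff}) it suffices to evaluate $D(P_{\theta_0}\lVert P_{\theta_1})$ for a single $z_1=\{\gamma_1 y_1,\gamma_1\}$, exactly as $L_2$ is defined.

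Next I would decompose this single-sample divergence by conditioning on the scheduler outcome $\gamma_1$. Reading $\ln(dP_{\theta_0}/dP_{\theta_1})$ off the factored joint pdf in (\ref{randomscheduler}), the event $\{\gamma_1=0\}$ carries the density factor $1-p$ under both hypotheses, so its log-ratio vanishes and contributes nothing. On $\{\gamma_1=1\}$ the common factor $p$ cancels in the ratio, leaving the log-ratio of the two Gaussian pdfs $\mathcal{N}(y_1|\theta_0,\sigma^2)$ and $\mathcal{N}(y_1|\theta_1,\sigma^2)$. Because the scheduling sequence is independent of the measurements in the random scheduler, the conditional law of $y_1$ given $\gamma_1=1$ is unchanged, so the expectation factors cleanly as $P_{\theta_0}\{\gamma_1=1\}$ times the ordinary Gaussian divergence.

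I would then carry out the elementary Gaussian step: expanding $(y-\theta_1)^2-(y-\theta_0)^2$ and taking expectation under $\theta_0$ (using $\bE_{\theta_0}[y_1]=\theta_0$) gives $\bE_{\theta_0}[\ln(\mathcal{N}(y_1|\theta_0,\sigma^2)/\mathcal{N}(y_1|\theta_1,\sigma^2))]=(\theta_1-\theta_0)^2/(2\sigma^2)$. Combining the two steps yields $D(P_{\theta_0}\lVert P_{\theta_1})=p\cdot(\theta_1-\theta_0)^2/(2\sigma^2)$. Finally I would impose the constraint: since $\gamma_i$ is independent of the state, $R_\theta=\bE_\theta[\gamma_1]=P_\theta\{\gamma_1=1\}=p$ for both $\theta\in\{\theta_0,\theta_1\}$, so $R_\theta\leqslant R$ reads simply as $p\leqslant R$. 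As the divergence is linear and strictly increasing in $p$, the maximum is attained at $p=R$, giving $L_2=(\theta_1-\theta_0)^2 R/(2\sigma^2)$.

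There is no genuine analytical obstacle here; the computation is short and the optimization over $p$ is trivial by monotonicity. The one point that requires care is that $z_1$ is a mixed atomic-plus-continuous random variable, so the relative entropy must be interpreted through the Radon--Nikodym derivative (as flagged in the paper's footnote) rather than an honest density. This is precisely why the discrete atom $\{\gamma_1=0\}$ must be shown to drop out, and it does so cleanly because both hypotheses assign it the identical mass $1-p$, leaving only the absolutely continuous part on $\{\gamma_1=1\}$ to contribute.
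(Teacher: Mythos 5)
Your proposal is correct and follows essentially the same route as the paper: decompose the single-sample divergence over the events $\{\gamma_1=0\}$ (which contributes nothing since both hypotheses assign it mass $1-p$) and $\{\gamma_1=1\}$ (which contributes $p$ times the standard Gaussian divergence $(\theta_1-\theta_0)^2/(2\sigma^2)$), then use $R_\theta=p\leqslant R$ and monotonicity in $p$ to conclude the maximum is $L_2=(\theta_1-\theta_0)^2R/(2\sigma^2)$. Your added care about the mixed atomic-plus-continuous law of $z_1$ matches the caveat the paper relegates to a footnote.
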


\begin{proof}
Recall that $z_i$ is an i.i.d. random sequence with the pdf given by
\begin{equation}
{p_{{\theta}}}(z) = {\left[ p\cdot{\mathcal{N}(y; {\theta },\sigma^2 )} \right]^{\gamma }}{\left[ {1 - p} \right]^{1 - \gamma }},
\end{equation}
where $ \theta\in\{\theta_0, \theta_1\}$ and $\gamma\in\{0, 1\}$.
Then, the relative entropy of the two distributions  from $\cH_0$ to $\cH_1$ under the random scheduler is computed as follows
\begin{equation*}
\begin{split}
&D({P_{\theta_0}}\lVert{P_{\theta_1}})\\
&= \int_{\{\gamma = 1\}} {\ln\frac{{{p_{{\theta _0}}}(z)}}{{{p_{{\theta _1}}}(z)}}d{F_{{\theta _0}}}(z)}  + \int_{\{\gamma  = 0\}} {\ln\frac{{{p_{{\theta _0}}}(z)}}{{{p_{{\theta _1}}}(z)}}d{F_{{\theta _0}}}(z)}\\
&= p \cdot \int_{ - \infty }^\infty  {{p_{{\theta _0}}}(y){\ln}\frac{{{p_{{\theta _0}}}(y)}}{{{p_{{\theta _1}}}(y)}}dy}  + 0\\
&= \frac{{(\theta _1^2 - \theta _0^2)p}}{{2{\sigma ^2}}} \int_{ - \infty }^\infty  {{p_{{\theta _0}}}(y)dy}  - \frac{{({\theta _1} - {\theta _0})}p}{{{\sigma ^2}}} \int_{ - \infty }^\infty  {y{p_{{\theta _0}}}(y)dy} \\
&=\frac{{{{({\theta _1} - {\theta _0})}^2}}p}{{2{\sigma ^2}}},
\end{split}
\end{equation*}
where $F_{\theta}(\cdot)$ is the distribution function of random variable ${z_i}$ under $\theta\in\{\theta_0, \theta_1\}$ and $p_{\theta}(y)$ is the pdf of $y$ with an abuse of notation.

Consider the scheduled transmission rate constraint, we obtain that
\begin{equation}
R_\theta=p\leqslant R,
\end{equation}
which establishes the desired result.
\end{proof}
In the following result, we prove that the LRB scheduler strictly improves the random scheduler.
\begin{thm}Under the same scheduled transmission rate constraint, i.e., $R_\theta\leqslant R$, the optimal LRB scheduler achieves a strictly better asymptotic N-P testing performance than the random scheduler if $0<R<1$, i.e.,
$$
L_1(a^*,b^*)>L_2$$
where $L_1(a^*,b^*)$ and $L_2$ are given in Theorem \ref{thm_optimal} and Lemma \ref{lem_rdnscheduler}, respectively.
\end{thm}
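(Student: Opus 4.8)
The plan is to reduce the claimed inequality to an elementary fact about the standard normal density. Writing $U^{*}=(-\infty,a^{*}]\cup[b^{*},\infty)$ and substituting the closed forms from Theorem \ref{thm_optimal} and Lemma \ref{lem_rdnscheduler}, I would factor $\theta_1^2-\theta_0^2=(\theta_1-\theta_0)(\theta_1+\theta_0)$ and collect the common positive prefactor $(\theta_1-\theta_0)/\sigma^2$. A short computation then gives
\begin{equation*}
L_1(a^{*},b^{*})-L_2=\frac{\theta_1-\theta_0}{\sigma^2}\left[R\theta_0-\int_{U^{*}}y\,p_{\theta_0}(y)\,dy\right],
\end{equation*}
so it suffices to show that the bracketed quantity is strictly positive.

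First I would use the rate constraint (\ref{constraint11}), which states precisely that $R=R_{\theta_0}=\int_{U^{*}}p_{\theta_0}(y)\,dy$. Hence $R\theta_0=\int_{U^{*}}\theta_0\,p_{\theta_0}(y)\,dy$, and the bracket becomes $\int_{U^{*}}(\theta_0-y)\,p_{\theta_0}(y)\,dy$. Next, with the substitution $s=(y-\theta_0)/\sigma$ and the identity $\varphi'(s)=-s\varphi(s)$, this integral evaluates in closed form: setting $\tilde a=(a^{*}-\theta_0)/\sigma$ and $\tilde b=(b^{*}-\theta_0)/\sigma$, one finds
\begin{equation*}
\int_{U^{*}}(\theta_0-y)\,p_{\theta_0}(y)\,dy=\sigma\bigl(\varphi(\tilde a)-\varphi(\tilde b)\bigr).
\end{equation*}
Thus the whole theorem reduces to the single inequality $\varphi(\tilde a)>\varphi(\tilde b)$.

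Finally, since $\varphi$ is strictly decreasing in $|t|$, it remains to show $|\tilde a|<|\tilde b|$. This is where the scheduler's structural constraint enters: from $a^{*}+b^{*}=\theta_0+\theta_1$ I get $\tilde a+\tilde b=(\theta_1-\theta_0)/\sigma>0$, while $0<R<1$ forces the censored window to be nondegenerate, so $a^{*}<b^{*}$ and $\tilde a<\tilde b$ strictly. Then $\tilde b^{2}-\tilde a^{2}=(\tilde b-\tilde a)(\tilde b+\tilde a)>0$ as a product of two strictly positive factors, giving $|\tilde a|<|\tilde b|$ and hence $\varphi(\tilde a)>\varphi(\tilde b)$. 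I expect the main obstacle to be not the algebra but correctly identifying that the symmetry condition $a^{*}+b^{*}=\theta_0+\theta_1$ centers the censored interval at the midpoint of $\theta_0$ and $\theta_1$, strictly to the right of $\theta_0$, which is exactly what makes the conditional mean of the transmitted measurements under $\cH_0$ fall below $\theta_0$ and drives the strict gain over the random scheduler. One should also verify that $0<R<1$ (not merely $R\leqslant 1$) is genuinely used for strictness, since at $R=1$ one has $a^{*}=b^{*}$ and the two schemes coincide.
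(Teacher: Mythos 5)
Your proof is correct and follows essentially the same route as the paper: both reduce $L_1(a^*,b^*)-L_2$ to the positivity of $\int_{a^*}^{b^*}(y-\theta_0)p_{\theta_0}(y)\,dy$ and exploit the asymmetry of the censored interval about $\theta_0$ coming from $a^*+b^*=\theta_0+\theta_1$. The only (cosmetic) difference is in the last step, where you evaluate the integral in closed form as $\sigma\bigl(\varphi(\tilde a)-\varphi(\tilde b)\bigr)$ via the antiderivative of $t\varphi(t)$, which is arguably cleaner than the paper's comparison against an integral over a symmetric interval, but it rests on exactly the same fact $|\tilde a|<|\tilde b|$.
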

\begin{proof}
Under a scheduled transmission rate constraint $R<1$ and for any pair of $a^{*}$ and $b^{*}$ satisfying (\ref{constraint11}) and (\ref{constraint12}), it follows from Theorem \ref{thm_optimal} and Lemma \ref{lem_rdnscheduler} that
\begin{equation*}
\begin{split}
&L_{1}({a^*},{b^*}) - L_{2}\\
&= \frac{{({\theta _1} - {\theta _0})R}}{{{\sigma ^2}}} \cdot \int_{ - \infty }^\infty  {y{p_{{\theta _0}}}(y)dy} \\
&\quad\quad\quad\quad\quad- \frac{{({\theta _1} - {\theta _0})}}{{{\sigma ^2}}}\int_{ ( - \infty ,{a^*}] \cup [{b^*},\infty )}^{} {y{p_{{\theta _0}}}(y)} dy\\
& = \frac{{{\theta _1} - {\theta _0}}}{{{\sigma ^2}}}\left( {\int_{ a^* }^{b^*}  {y{p_{{\theta _0}}}(y)dy}  - (1 - R)\int_{ - \infty }^\infty  {y{p_{{\theta _0}}}(y)dy} } \right)\\
&= \frac{{{\theta _1} - {\theta _0}}}{{{\sigma ^2}}}\left( {\int_{a^{*}}^{b^{*}} {y{p_{{\theta _0}}}(y)dy}  - (1 - R){\theta _0}} \right).
\end{split}
\end{equation*}
Note that $a^*+b^*=\theta_0+\theta_1$ and $b^{*}>a^{*}$, we obtain that
\begin{equation*}
\theta_{0}-b^{*}<-|a^{*}-\theta_{0}|.
\end{equation*}
This further implies that
 \begin{equation*}
\begin{split}
&\int_{{a^*}}^{{b^*}} {y{p_{{\theta _0}}}(y)dy}  - (1 - R){\theta _0}\\
&= \int_{{a^*}}^{{b^*}} {(y - {\theta _0}){p_{{\theta _0}}}(y)dy}  + {\theta _0}\int_{{a^*}}^{{b^*}} {{p_{{\theta _0}}}(y)dy}  - (1 - R){\theta _0}\\
&=\sigma \int_{\frac{{{\theta _0} - {b^*}}}{\sigma }}^{\frac{{{b^*} - {\theta _0}}}{\sigma }} {t\varphi (t)dt}  - \sigma \int_{\frac{{{\theta _0} - {b^*}}}{\sigma }}^{\frac{{{a^*} - {\theta _0}}}{\sigma }} {t\varphi (t)dt}\\
& > 0 - \sigma \int_{ - \frac{{|{a^*} - {\theta _0}|}}{\sigma }}^{\frac{{|{a^*} - {\theta _0}|}}{\sigma }} {t\varphi (t)dt}  \\
&= 0.
\end{split}
\end{equation*}

Since $\theta_1>\theta_0$, it is obvious that
$
L_1({a^*},{b^*}) - L_2 >0,
$
which completes the proof. \end{proof}
\begin{exam}
\label{exam1}
We provide a graphical view in Fig. \ref{fig_comparison} to compare the error exponents (relative entropy) for the two schedulers under different scheduled transmission rate constraints by setting $\theta_{0}=0$, $\theta_{1}=1,$ and $\sigma^2=1$. For the random scheduler, the error exponent grows linearly with respect to the scheduled transmission rate. For the LRB scheduler, however, we observe a faster increase of error exponent  when transmission rate is lower than $0.4$. It is also noteworthy that when only $70\%$ of sensor measurements are sent, the detection performance of the LBR scheduler is only 2.1\% worse than the standard N-P test using the full set of measurements. This suggests that the use of the scheduler with the form of (\ref{lrbscheduler}) is an effective method for maintaining a good detection performance when the scheduled transmission rate is limited.

\begin{figure}
  \centering
  \includegraphics[width=3.45in,height=1.80in]{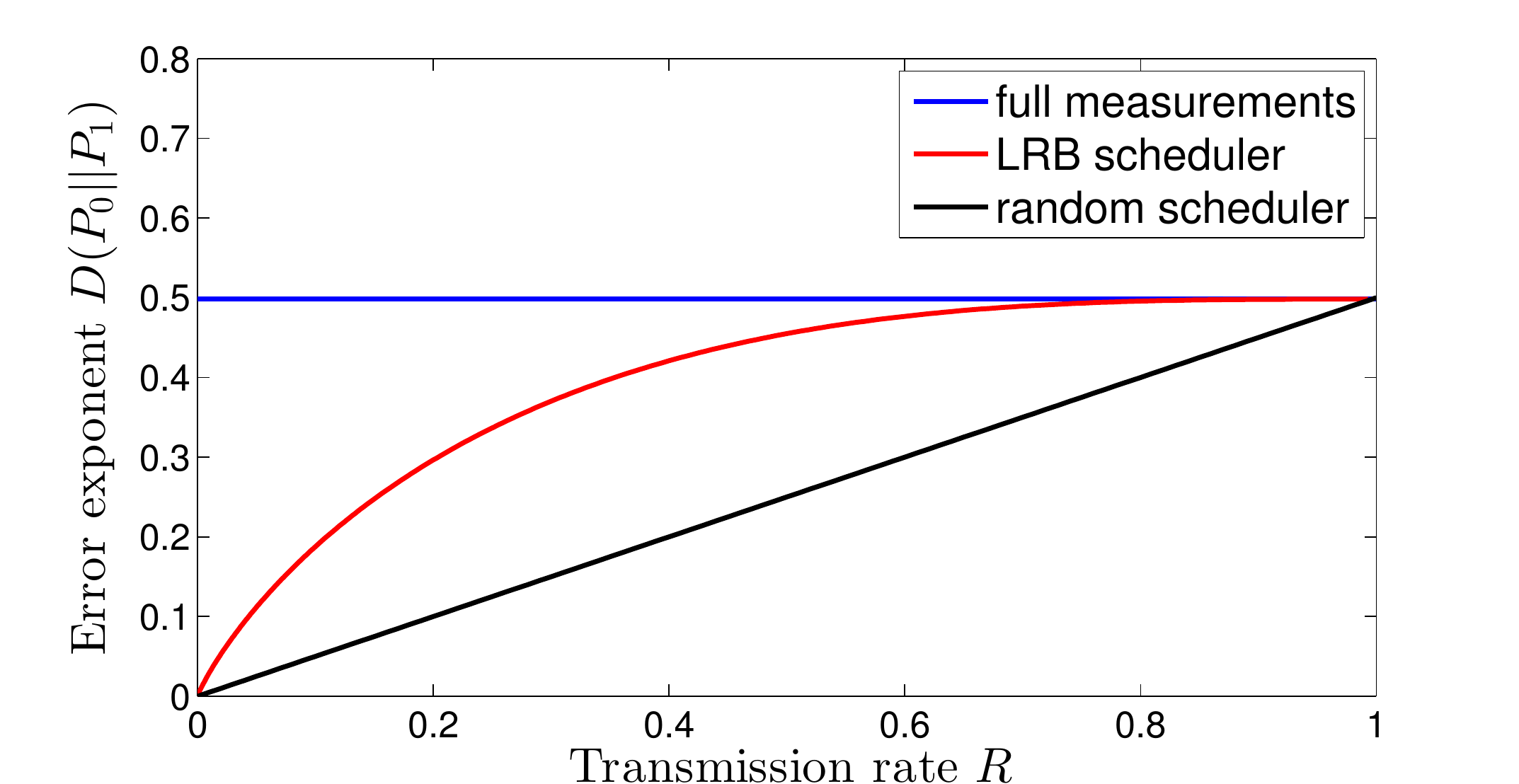}
    \caption{ Comparison of error exponents of the LRB scheduling strategy and random scheduling strategy under different scheduled transmission rate constraints.}
    \label{fig_comparison}
\end{figure}

\end{exam}

\section{Performance Analysis Under Cyber Attacks}
\label{section4}
Till now, we have considered the problem of networked detection problem via the Neyman-Pearson approach, where the communication channel is assumed secure. In this section, we studied the same problem over non-secure channels. In particular, the deceptive signals from the intruder are injected into the channel and arrive at the tester together with the true measurements. Then, if the sensor measurement $y_i$ is directly sent, it is impossible for the tester to distinguish the true measurement from the deceptive signal.

Observe that the binary state  $\gamma_i$  of the LRB scheduler can usually be securely encrypted, and successfully transmitted to the tester. We shall explore this information to retrieve some sensor measurements for accomplishing the detection task. To this purpose, we first propose a measurement discriminate protocol and then evaluate the error component for detection under different attack intensities.


\subsection{Measurement Discrimination}
Recall the LRB scheduler in (\ref{lrbscheduler}). If a packet with the message that $\gamma_i=0$ arrives,  it can be deduced by the tester that $y_i\in (a,b)$ and no sensor measurements shall be sent. When the message $\gamma_i=1$ comes, the tester is informed that a true sensor measurement should satisfy that $y_i\in(-\infty,a]\cup[b,\infty)$. Thus, we are able to design the following signal processing protocol for the tester to retrieve some sensor measurements while all the deceptive measurements are discarded.

\begin{itemize}
\item When $\gamma_i=0$ is observed by the tester, discard all the received measurements (if there is any).
\item When  $\gamma_i=1$ is observed by the tester and there is only one measurement  in the range $(-\infty,a]\cup[b,\infty)$, then select it as the true sensor measurement $y_i$. Otherwise, discard all the received data.
\end{itemize}

The first rule is obvious. Arrival of  $\gamma_i=0$ indicates that no sensor measurement is sent from sensors. Thus, all the received data are deceptive data and discarded. $\gamma_i=1$ suggests that a useful sensor measurement $y_i\in(-\infty,a]\cup[b,\infty)$ must have been transmitted to the tester. However, the tester may also receive a deceptive signal $q_i$. If $q_i\notin (-\infty,a]\cup[b,\infty)$, the tester can correctly identify the true measurement $y_i$. If $q_i\in (-\infty,a]\cup[b,\infty)$,  the tester has to discard all the received data since it is impossible to distinguish between $q_i$ and $y_i$.

If there is no scheduler in (\ref{lrbscheduler}), the tester can not distinguish between $y_i$ and $q_i$ at every sampling time, and thus fails to complete the detection task. From this point of view, the scheduler helps to against this type of cyber attacks. Obviously, the  detection performance degrades when deceptive signals come more frequently. To formalize it, let a binary variable $\zeta_i$ to denote whether the attacker injects a deceptive signal to the channel, e.g., $\zeta_i=1$ means that there is a deceptive signal $q_i$ in the channel, otherwise $\zeta_i=0$. Thus, {\it attack intensity} $P_i:=Pr\{\zeta_i=1\}$ reflects the frequency of attacking the networked system. We further assume that the deceptive signal $\{q_i\}$ forms an i.i.d. sequence, and let
\begin{equation}
\label{attackintensity}
\eta=1-P_i\cdot Pr\{q_i\in (-\infty,a]\cup[b,\infty)\}
\end{equation}

Under the above discrimination scheme,  each transmitted sensor measurement will be affected by cyber attacks with probability $1-\eta$.  In the rest of this section, we shall quantify the effect of cyber attacks on the detection performance by $\eta$.

\subsection{Performance Analysis}
The sensor information accessed by the tester at time $N$ is denoted by ${\bf{w}} = \{ {w_1}, \ldots ,{w_N}\}$, whose pdf is given by
\beq
\label{pdf_attack}
f(\textbf{w}|\theta)&=&\prod\limits_{i = 1}^N {{p_\theta }({w_i})} \nonumber\\
& =  &\prod\limits_{i = 1}^N {{{\left\{ {{{\left[ {\eta  \cdot {p_T}({y_i}|\theta ,{\sigma ^2},a,b)} \right]}^{\delta_i} }{{(1 - \eta )}^{1 - \delta_i}}} \right\}}^{{\gamma _i}}}} \nonumber\\
& & \cdot {\left[ {{P_\theta }\{ {\gamma _i} = 0\} } \right]^{1 - {\gamma _i}}},\quad
\enq
where $\delta_i$ is a binary random variable. It is one if the transmitted measurement $y_i$ is correctly identified by the tester and $\delta_i=0$ otherwise.



By a similar derivation in (\ref{pT})-(\ref{Pgamma}), we shall reject $\cH_{0}$ in favor of $\cH_{1}$ when the following condition is met.
\begin{equation}
\label{rejectRegion_attack}
\frac{{ - (\theta _1^2 - \theta _0^2)}}{{2{\sigma ^2}}} \cdot \sum\limits_{i = 1}^N {{\delta _i} \cdot {\gamma _i}}  + \frac{{({\theta _1} - {\theta _0})}}{{{\sigma ^2}}}\sum\limits_{i = 1}^N {{\delta _i} \cdot {\gamma _i}{y_i}}  > \ln k.
\end{equation}

\begin{rem}
It is noteworthy that $\delta_i$ equals to one if there is no cyber attack. The rejection region in (\ref{rejectRegion_attack}) is then reduced to the one in (\ref{rejectlrb}). This suggests that  the proposed detection framework is more general and can perform robustly under non-secure channels.
\end{rem}

The following theorem quantifies the influence of cyber attacks on the detection performance under different attack intensities.

\begin{thm}\label{thm_attack}
Under the LRB scheduler in (\ref{lrbscheduler}) with  cyber attacks, the relative entropy for detection is given by
\begin{equation}
D(P_{\theta_0}({w})||P_{\theta_1}({w}))=\eta \cdot D(P_{\theta_0}({z})||P_{\theta_1}({z})),
\end{equation}
where $D(P_{\theta_0}({z})||P_{\theta_1}({z}))$ denotes the relative entropy under secure channel and is given in (\ref{reentropy}), and $\eta$ is defined in (\ref{attackintensity}).
\end{thm}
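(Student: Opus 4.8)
The plan is to compute the per-observation relative entropy $D(P_{\theta_0}(w)\|P_{\theta_1}(w))$ directly from the single-sample law of $w_i$ read off from the joint density (\ref{pdf_attack}), and to show that two of its three contributions vanish while the surviving one reproduces $\eta$ times the secure-channel divergence. Since $\{w_i\}$ is i.i.d., it suffices to work with one sample. First I would partition the observation space according to the pair $(\gamma_i,\delta_i)$ into three events: (i) $\gamma_i=0$; (ii) $\gamma_i=1,\ \delta_i=0$; and (iii) $\gamma_i=1,\ \delta_i=1$. On event (iii) the law of $w_i$ is absolutely continuous on $U=(-\infty,a]\cup[b,\infty)$ with ``density'' $\eta\,p_T(y|\theta,\sigma^2,a,b)$, whereas events (i) and (ii) carry only the point masses $P_\theta\{\gamma_i=0\}=1-R_\theta$ and $(1-\eta)R_\theta$, respectively. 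Writing the divergence as the sum of these three pieces is the backbone of the argument.

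Next I would dispose of the two discrete pieces. The decisive structural fact is that the attack parameter $\eta$ in (\ref{attackintensity}) depends only on the deceptive signal $q_i$ and the attack intensity $P_i$, and is therefore independent of the true state $\theta$; moreover $\delta_i$ is conditionally independent of $y_i$ given $\gamma_i=1$, with $P_\theta\{\delta_i=1\mid\gamma_i=1\}=\eta$ under both hypotheses. Consequently the log-likelihood ratio on event (ii) is $\ln\frac{(1-\eta)R_{\theta_0}}{(1-\eta)R_{\theta_1}}=\ln\frac{R_{\theta_0}}{R_{\theta_1}}$, and on event (i) it is $\ln\frac{1-R_{\theta_0}}{1-R_{\theta_1}}$. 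Here I would invoke the symmetry $a+b=\theta_0+\theta_1$ built into the scheduler (\ref{lrbscheduler}): a short computation using $\Phi$ and the substitutions $z^1(\theta),z^2(\theta)$ shows $P_{\theta_0}\{\gamma_i=0\}=P_{\theta_1}\{\gamma_i=0\}$, i.e. $R_{\theta_0}=R_{\theta_1}$. Both discrete log-ratios then equal $\ln 1=0$, killing contributions (i) and (ii).

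For the continuous piece (iii) I would factor the integrand as
\begin{equation*}
\int_U \eta\,p_T(y|\theta_0)\ln\frac{\eta\,p_T(y|\theta_0)}{\eta\,p_T(y|\theta_1)}\,dy=\eta\int_U p_T(y|\theta_0)\ln\frac{p_T(y|\theta_0)}{p_T(y|\theta_1)}\,dy,
\end{equation*}
where the $\eta$ inside the logarithm cancels and the outer $\eta$ is pulled out. The remaining integral is exactly the continuous contribution to the secure-channel divergence obtained from (\ref{pdf}); since the discrete $\gamma_i=0$ term in the secure case vanishes for the same symmetry reason, this integral equals $D(P_{\theta_0}(z)\|P_{\theta_1}(z))=L_1(a,b)$ in (\ref{reentropy}). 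Summing the three pieces then yields $D(P_{\theta_0}(w)\|P_{\theta_1}(w))=\eta\,D(P_{\theta_0}(z)\|P_{\theta_1}(z))$, as claimed.

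The main obstacle is the bookkeeping for the mixed discrete--continuous law of $w_i$: one must track that $p_T$ integrates to $R$ rather than to one, and verify that the masses on events (i)--(iii) sum to $\eta R+(1-\eta)R+(1-R)=1$. The conceptual crux, however, is the vanishing of the discrete terms; this is precisely where the scheduler's symmetry $a+b=\theta_0+\theta_1$ and the state-independence of $\eta$ are indispensable, and without either the clean factor-of-$\eta$ relation would fail.
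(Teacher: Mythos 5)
Your proposal is correct and follows essentially the same route as the paper's proof: the same three-way decomposition over $(\gamma_i,\delta_i)$, the same use of the symmetry $a+b=\theta_0+\theta_1$ to force $P_{\theta_0}\{\gamma=0\}=P_{\theta_1}\{\gamma=0\}$ (hence killing the discrete terms), and the same cancellation of $\eta$ inside the logarithm on the surviving event $\{\gamma=1,\delta=1\}$. The only difference is cosmetic: you identify the remaining integral with the secure-channel divergence by reusing the secure-case computation, whereas the paper re-evaluates the Gaussian integrals explicitly before recognizing the factor $\eta\cdot D(P_{\theta_0}(z)\|P_{\theta_1}(z))$.
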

\begin{proof}
Recall that $w_{i}$ is an i.i.d. random sequence with the pdf given by
\begin{equation*}
p_\theta(w) = {\left\{ {{{\left[ {\eta {p_T}(y|\theta ,{\sigma ^2},a,b)} \right]}^\delta }{{(1 - \eta )}^{1 - \delta }}} \right\}^\gamma }{\left[ {{P_\theta }\{ \gamma  = 0\} } \right]^{1 - \gamma }}
\end{equation*}
where $ \theta\in\{\theta_0, \theta_1\}$ and $\delta, \gamma\in\{0, 1\}$.
The relative entropy of the two distributions  from $\cH_0$ to $\cH_1$ under the sensor scheduler and  the proposed measurement discrimination over non-secure channels is computed as follows
\begin{equation*}
\begin{split}
&D(P_{\theta_0}({w})||P_{\theta_1}({w}))\\
&=\int_{\{ \gamma  = 1\} } {\ln \frac{{{p_{{\theta _0}}}(w)}}{{{p_{{\theta _1}}}(w)}}d{F_{{\theta _0}}}(w)}  + \int_{\{ \gamma  = 0\} } {\ln \frac{{{p_{{\theta _0}}}(w)}}{{{p_{{\theta _1}}}(w)}}d{F_{{\theta _0}}}(w)} \\
&=  \int_{\{ \gamma  = 0\} } {\ln \frac{{{p_{{\theta _0}}}(w)}}{{{p_{{\theta _1}}}(w)}}d{F_{{\theta _0}}}(w)} +
\int\limits_{\{ \gamma  = 1,\delta  = 1\} } {\ln \frac{{{p_{{\theta _0}}}(w)}}{{{p_{{\theta _1}}}(w)}}d{F_{{\theta _0}}}(w)}  \\
&~~~~+\int\limits_{\{ \gamma  = 1,\delta  = 0\} } {\ln \frac{{{p_{{\theta _0}}}(w)}}{{{p_{{\theta _1}}}(w)}}d{F_{{\theta _0}}}(w)} \\
&= \int_{\{ \gamma  = 1\} } {\eta  \cdot {p_T}(y|{\theta _0},\sigma ,a,b)\ln \frac{{{p_T}(y|{\theta _0},\sigma ,a,b)}}{{{p_T}(y|{\theta _1},\sigma ,a,b)}}dy}  + 0\\
&~~~~+ {P_{{\theta _0}}}\{ \gamma  = 0\}  \cdot \ln \frac{{{P_{{\theta _0}}}\{ \gamma  = 0\} }}{{{P_{{\theta _1}}}\{ \gamma  = 0\} }}\\
&= \eta  \cdot \int_{( - \infty ,a] \cup [b,\infty )}^{} {{p_{{\theta _0}}}(y)\ln \frac{{\exp [ - {{(y - {\theta _0})}^2}/2{\sigma ^2}]}}{{\exp [ - {{(y - {\theta _1})}^2}/2{\sigma ^2}]}}} dy\\
&~~~~+ {P_{{\theta _0}}}\{ \gamma  = 0\}  \cdot \ln \frac{{{P_{{\theta _0}}}\{ \gamma  = 0\} }}{{{P_{{\theta _1}}}\{ \gamma  = 0\} }}\\
&=  - \frac{{\eta({\theta _1} - {\theta _0}) }}{{{\sigma ^2}}}\left( {\int_{ - \infty }^a {y{p_{{\theta _0}}}(y)dy}  + \int_b^\infty  {y{p_{{\theta _0}}}(y)dy} } \right)\\
&~~~~+\frac{{\eta(\theta _1^2 - \theta _0^2) }}{{2{\sigma ^2}}}{P_{{\theta _0}}}\{ \gamma  = 1\} +{P_{{\theta _0}}}\{ \gamma  = 0\}  \cdot \ln \frac{{{P_{{\theta _0}}}\{ \gamma  = 0\} }}{{{P_{{\theta _1}}}\{ \gamma  = 0\} }}
\end{split}
\end{equation*}
From (\ref{Sumofab}), we obtain that
\begin{equation*}
\begin{split}
{P_{{\theta _0}}}\{\gamma  = 0\} &= \int_{\frac{{{a} - {\theta _0}}}{\sigma }}^{\frac{{{b} - {\theta _0}}}{\sigma }} {\varphi (t)dt}  = \int_{\frac{{{\theta _1} - {b}}}{\sigma }}^{\frac{{{\theta _1} - {a}}}{\sigma }} {\varphi (t)dt} \\
&= \int_{\frac{{{\theta _1} - {a}}}{\sigma }}^{\frac{{{\theta _1} - {b}}}{\sigma }} {\varphi ( - t)d( - t)}  = \int_{\frac{{{a} - {\theta _1}}}{\sigma }}^{\frac{{{b} - {\theta _1}}}{\sigma }} {\varphi (t)dt} \\
& = {P_{{\theta _1}}}\{\gamma  = 0\}.
\end{split}
\end{equation*}
It follows that
\begin{equation}
\begin{split}
&D(P_{\theta_0}({w})||P_{\theta_1}({w}))\\
&=  - \frac{{\eta({\theta _1} - {\theta _0}) }}{{{\sigma ^2}}}\left( {\int_{ - \infty }^a {y{p_{{\theta _0}}}(y)dy}  + \int_b^\infty  {y{p_{{\theta _0}}}(y)dy} } \right)\\
&~~~~+\frac{{\eta(\theta _1^2 - \theta _0^2) }}{{2{\sigma ^2}}}{P_{{\theta _0}}}\{ \gamma  = 1\}\\
&=\eta \cdot D(P_{\theta_0}({z})||P_{\theta_1}({z})),
\end{split}
\end{equation}
which completes the proof.
\end{proof}

By Lemma \ref{lem_chernoff}, an optimal LRB scheduler for detection should be designed to maximize the relative entropy from $\cH_0$ to $\cH_1$. Clearly,  $D(P_{\theta_0}({z})||P_{\theta_1}({z}))$ increases in $a$ but decreases in $b$. However, increasing $a$ will decrease the robustness of the LRB scheduler against cyber attacks, which is obvious by (\ref{attackintensity}) and Theorem \ref{thm_attack}. Note that the role of the parameter $b$ is similar to the parameter $a$. Together with Theorem \ref{thm_attack}, the optimal LRB scheduler for detection under cyber attacks can be defined.

\begin{defi}
Under cyber attacks, the optimal LRB scheduler in (\ref{lrbscheduler}) for the N-P test is solved by the following constrained optimization problem
\begin{eqnarray*}
&\text{maximize} & \eta \cdot D(P_{\theta_0}({z})||P_{\theta_1}({z}))\\
&\text{subject to}& R_{\theta}\leqslant R ,\quad \theta\in\{\theta_0, \theta_1\},\\
&&a+b=\theta_0+\theta_1,\quad a\leqslant b.
\end{eqnarray*}
\end{defi}

In view of the definition of $\eta$ in (\ref{attackintensity}), solving the above optimization problem requires the distribution of attack signal $q_i$, which however is generally unknown to the design of the LRB scheduler. Note that a typical detection system usually works under system state $\cH_0: \{\theta=\theta_0\}$ in the majority of time. An intelligent attacker is likely to fabricate a false alarm in the system by sending deceptive signals generated from distribution $P_{\theta_1}$ according to the alternative state $\cH_1:\{\theta=\theta_1\}$, i.e., $q_i\sim P_{\theta_1}$.

Under this case, we provide a graphical view in Fig. \ref{fig_comparison2} to illustrate how the error exponent (relative entropy) in Theorem \ref{thm_attack} changes under different attack intensities. Particularly, let $\theta_{0}=0$, $\theta_{1}=1,$ and $\sigma^2=1$ whereas the attack intensity varies from 0.1 to 1.0. The case when no attack happens (i.e., $P_i=0$) is also depicted for comparison.

\begin{figure}[htbp!]
  \centering
  \includegraphics[width=3.4in]{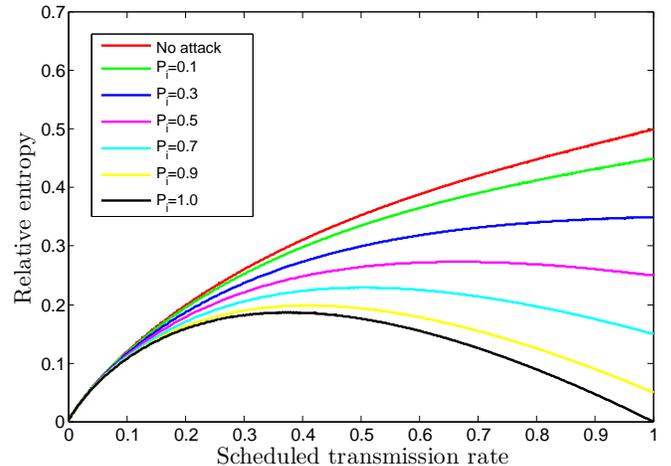}
    \caption{ Comparison of error exponents for detection under different attack intensities and different scheduled transmission rate constraints.}
    \label{fig_comparison2}
\end{figure}

By Fig. \ref{fig_comparison2}, it is apparent that a higher attack intensity results in a smaller error component, which is certainly consistent with the intuition as more deceptive signals will be received by the tester. It is interesting that the LRB scheduler with a lower scheduled transmission rate is more robust to cyber attacks. For example, when the rate is less than 0.3, the error exponents under different attack intensities are close. An instant explanation is that only a small portion of sensor measurements will be attacked by the deceptive signals.

The advantage of using the LRB scheduler becomes significant for a relatively large attack intensity ($P_i \ge 0.5$). Without the LRB scheduler, which corresponds to that the scheduled transmission rate is equal to one, the error components do not always attain their maximum at $R=1$. To get the best detection performance (e.g., the maximum error component), we have to resort to an effective LRB scheduler for the case that $P_i \ge 0.5$.

Under the extreme case ($P_i=1.0$), the error component even goes to zero as the scheduled transmission rate approaches one.  This implies that without  the LRB scheduler, the detection task can not be completed for the large attack intensity. In fact, without a LRB scheduler, every sensor measurement will be corrupted and can not be distinguished from the deceptive signals. Thus, all the sensor measurements will be discarded, leading to the failure of discriminating the system state.  This clearly explains the motivation for designing a LRB communication scheduler in the sensor side.

\section{Simulation}
In this section, the theoretical results will be validated respectively via simulations.

A deterministic signal in Gaussian noise is given by:
\begin{equation}
y_{i}=\theta+v_{i},i=1,2,\ldots,
\end{equation}
where $\theta\in \mathbb{R}$ is an unknown parameter for hypothesis testing, and $v_{i}$ is a white Gaussian noise with variance $\sigma^{2}=1$.

Consider a hypothesis testing problem as follows:
\begin{equation}
\cH_{0}:\{\theta=0\}\quad\quad  \text{versus} \quad\quad \cH_{1}:\{\theta=\theta_{1}\}.
\end{equation}

Obviously, the testing problem in this case is a signal detection problem, i.e. detecting the presence or absence of a deterministic signal in Gaussian enviornment. Accordingly, the signal to noise ratio (SNR) is given by
\begin{equation}
\text{SNR}=10\log_{10}\left( \frac{\theta_{1}^{2}}{\sigma^2}\right)\quad[\text{dB}].
\end{equation}
Low observation SNR of $-3$ dB,  medium observation SNR of $0$ dB and high observation SNR of $3$ dB are adopted in the numerical test.

The simulation is conducted from two aspects. Firstly, we verify the testing performance of the N-P test in Section \ref{sec_npt} under  the LRB scheduler with different communication constraints. For comparison, a random scheduler and an optimal N-P tester with the full set of measurements are considered. Secondly, we test the detection performance of the N-P test in Section \ref{section4} under the scheduler at the sensor and the  receiving protocols at the tester when false data injection attack happens.  Simulations are conducted under different transmission rates and attack intensities to show the robustness of the system in different scenarios.

\subsection{Detection Performance Without Cyber Attacks}

\begin{figure*}[htbp!]
\centering
\subfigure[Probability of Type II error at SNR -3dB] {\includegraphics[height=1.7in,width=2.35in]{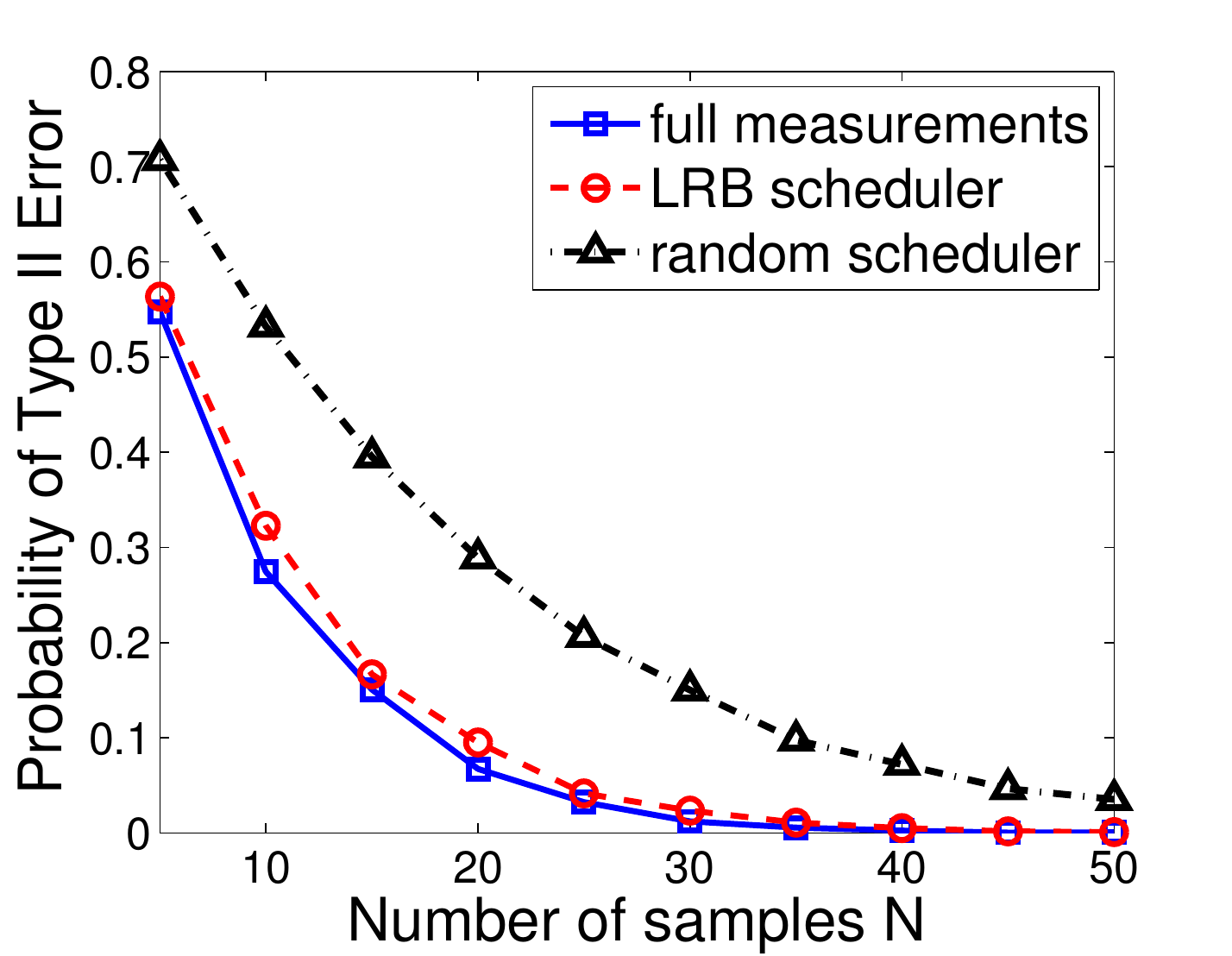}}
\subfigure[Probability of Type II error at SNR 0dB] {\includegraphics[height=1.7in,width=2.35in]{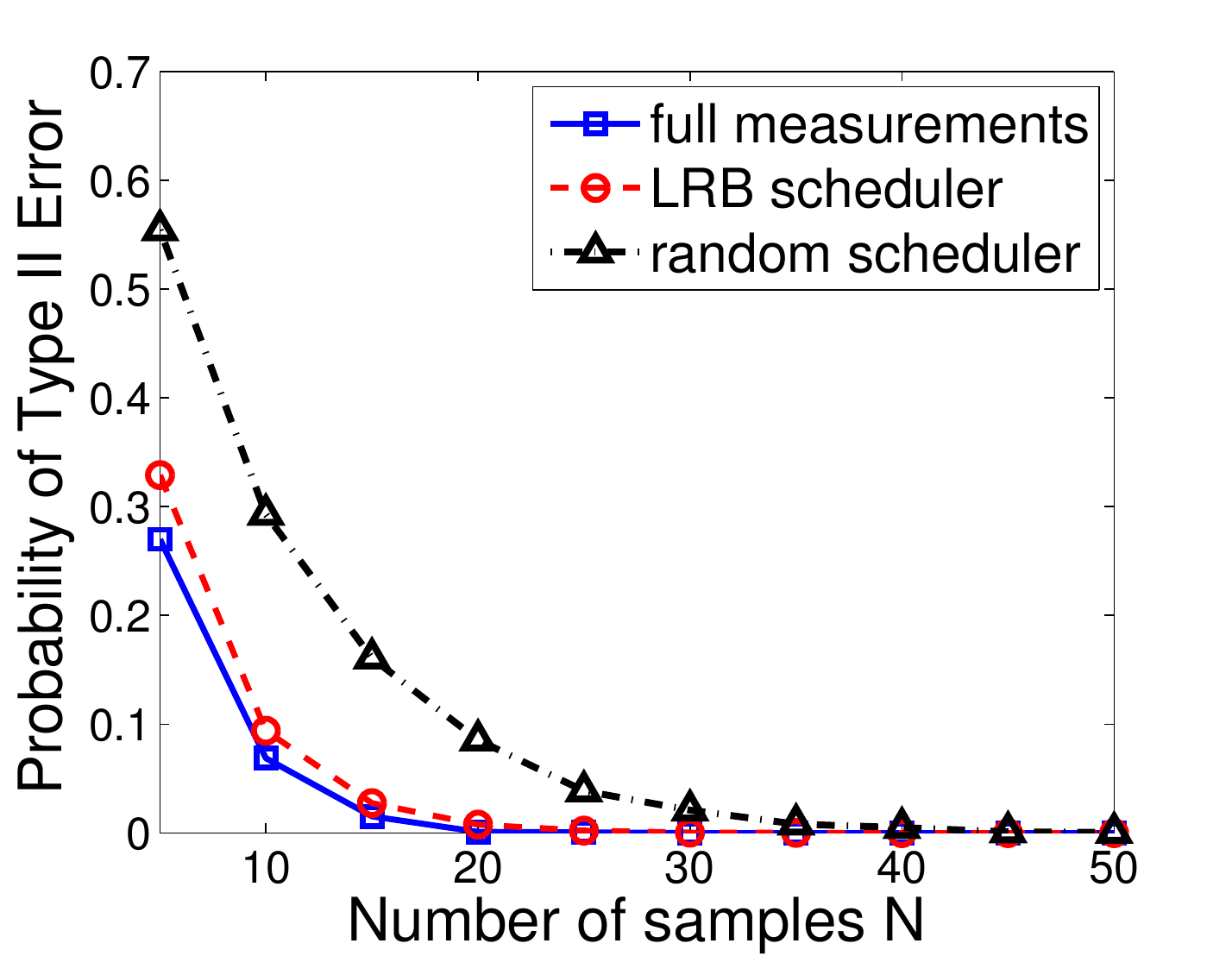}}
\subfigure[Probability of Type II error at SNR 3dB ] {\includegraphics[height=1.7in,width=2.35in]{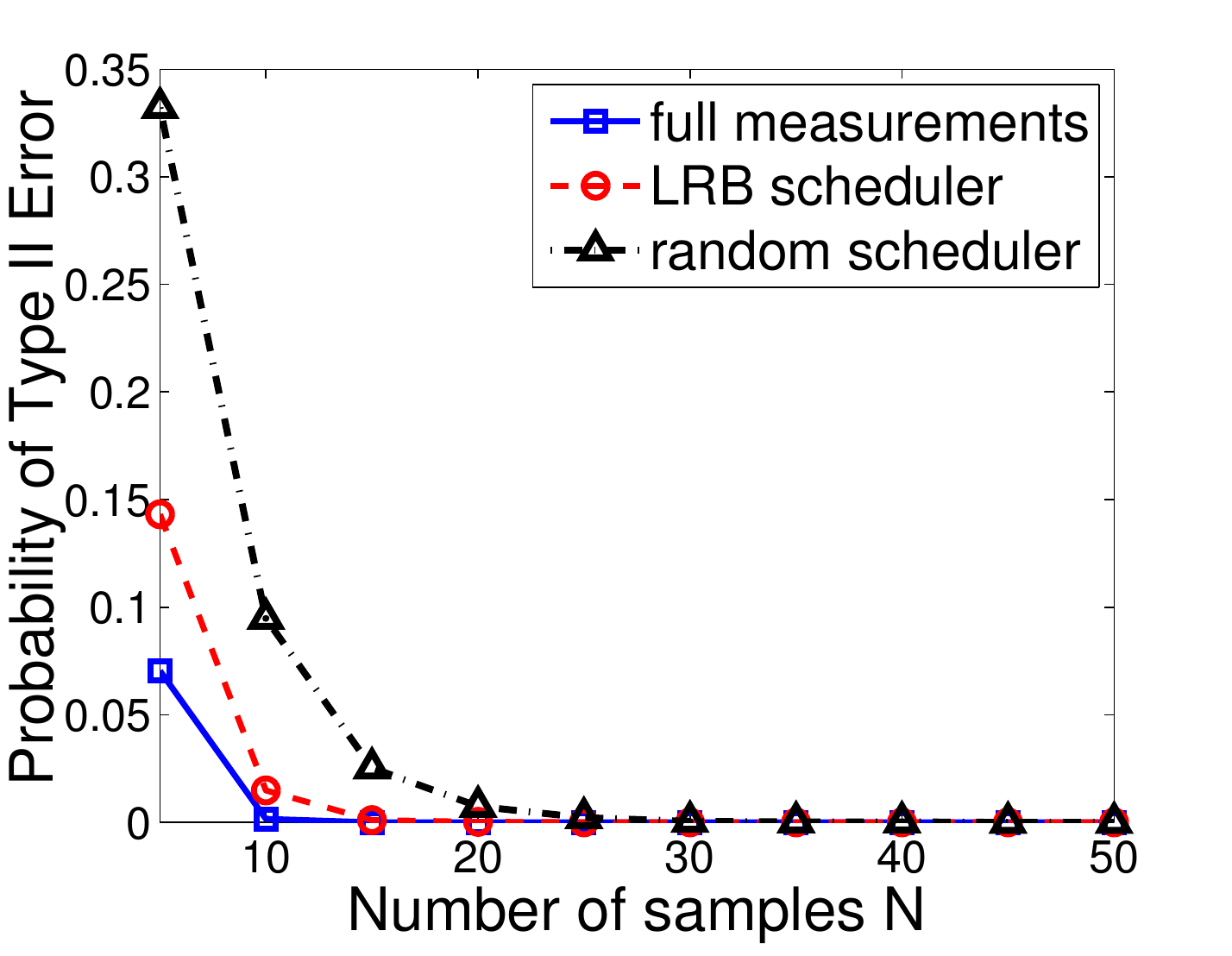}}
\caption{ Comparison of probability of Type II error under communication constraint $R_{\theta}\leqslant0.5$ in secure channel}
\label{NPsimulation1}
\end{figure*}

\begin{figure*}[htbp!]
\centering
\subfigure[Probability of Type II error at SNR -3dB] {\includegraphics[height=1.7in,width=2.35in]{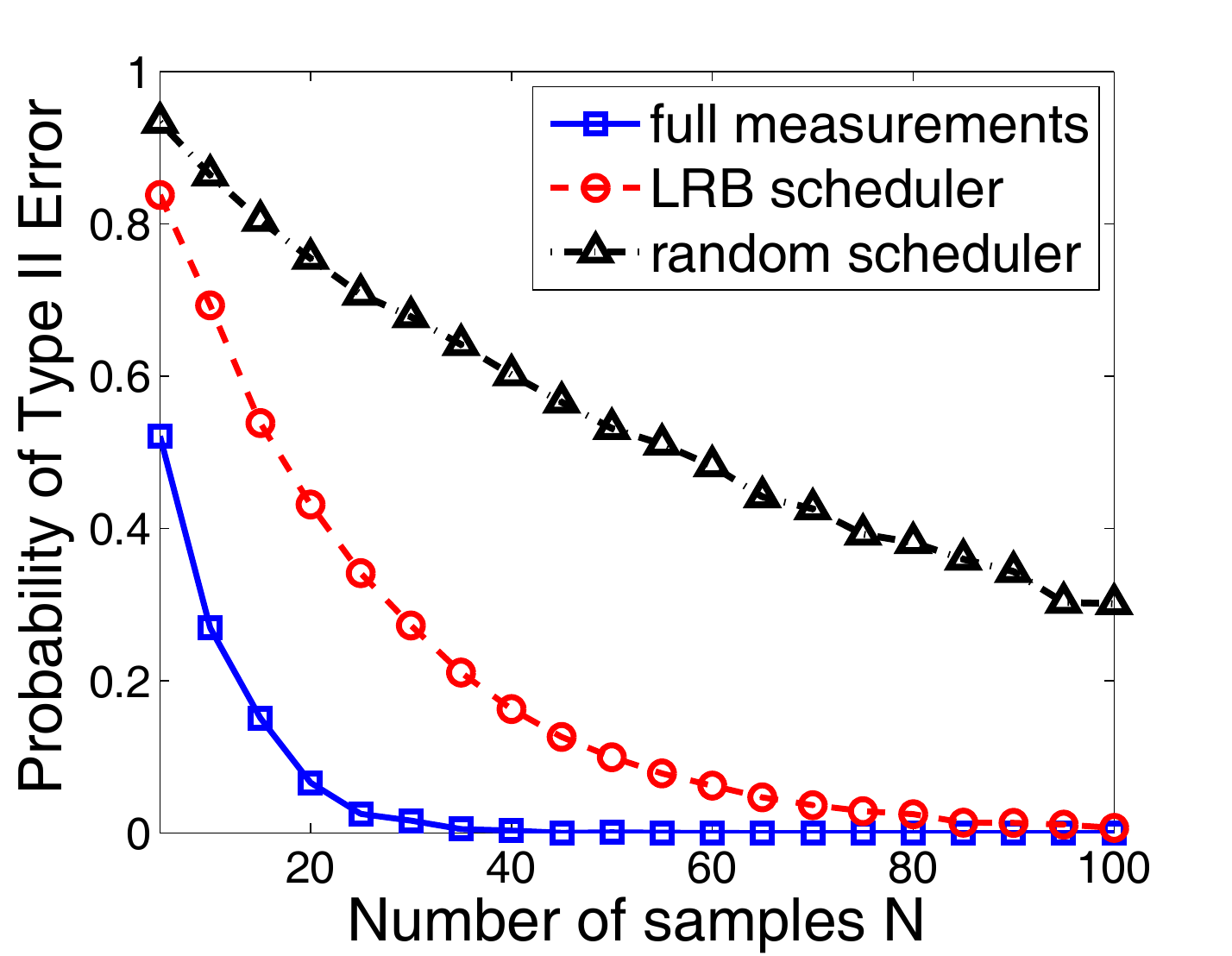}}
\subfigure[Probability of Type II error at SNR 0dB] {\includegraphics[height=1.7in,width=2.35in]{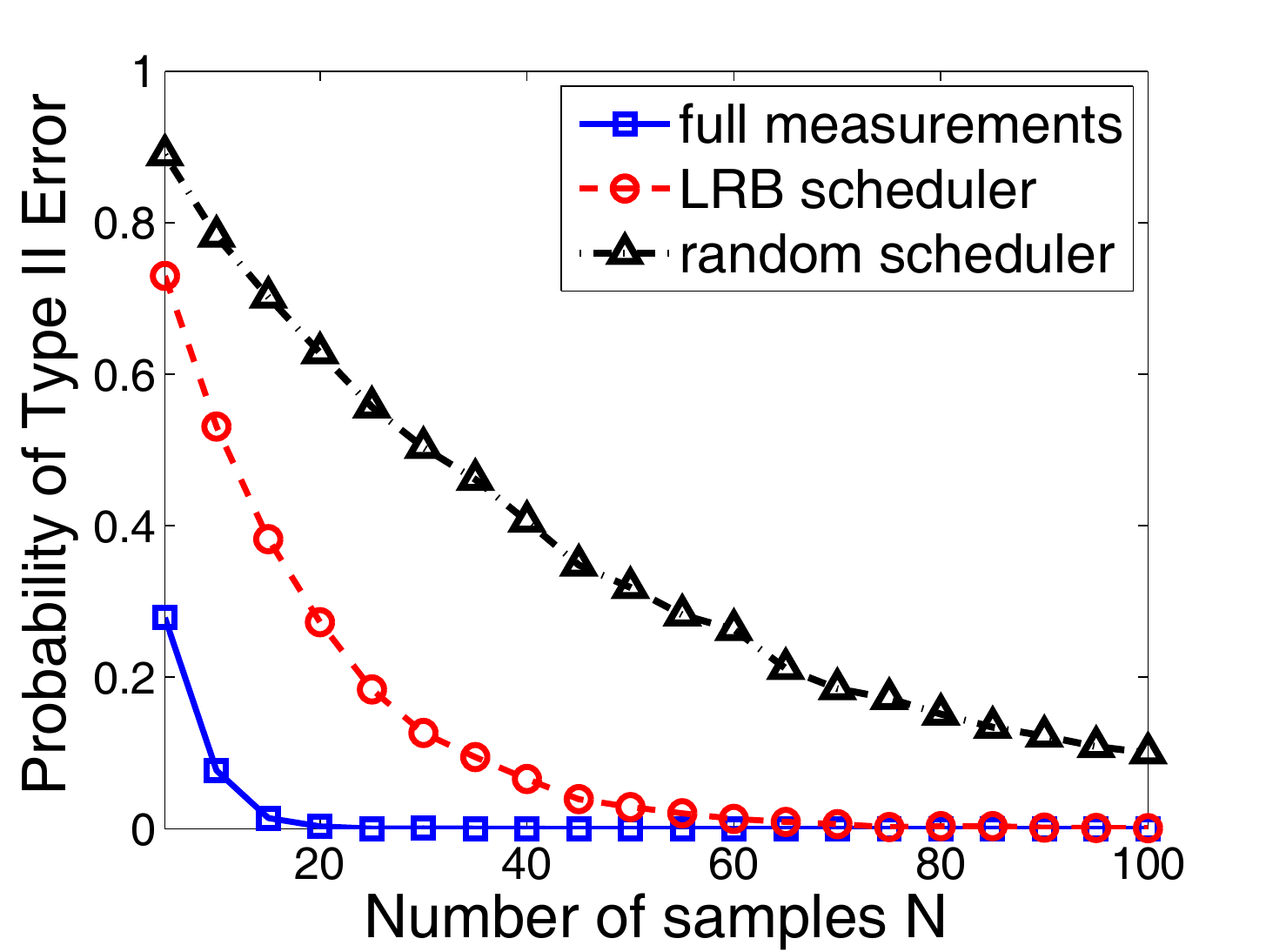}}
\subfigure[Probability of Type II error at SNR 3dB ] {\includegraphics[height=1.7in,width=2.35in]{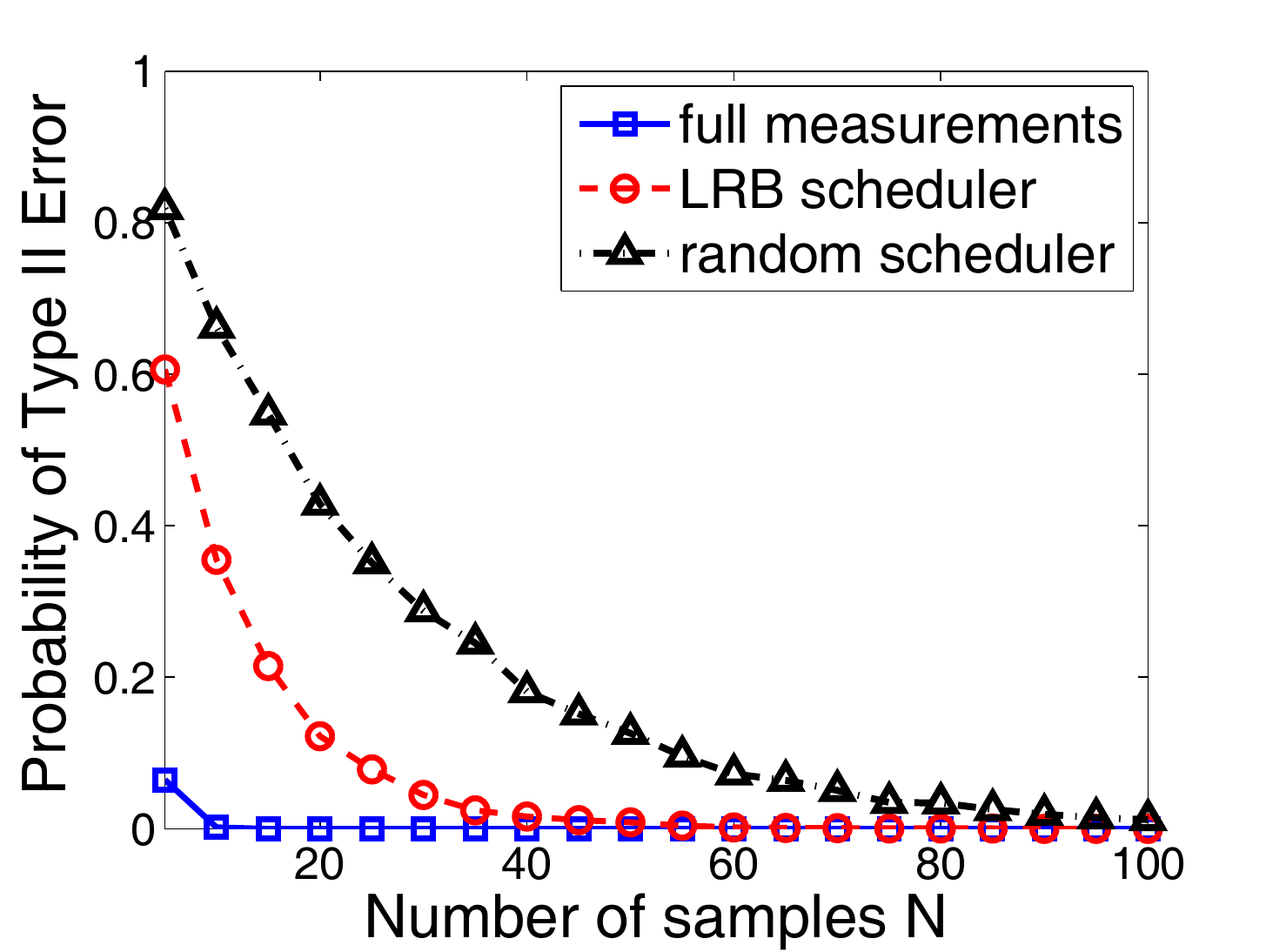}}
\caption{ Comparison of probability of Type II error under communication constraint $R_{\theta}\leqslant0.1$ in secure channel}
\label{NPsimulation2}
\end{figure*}

In this subsection, we check the testing performance of the scheduler based N-P test from the Neyman-Pearson approach, where we fix the probability of Type I error at $\alpha=0.05$ and numerically compute the probability of Type II error using the Monte Carlo method.  The channel is assumed to be secure, and the rejection condition of the null hypothesis is given by (\ref{rejectlrb}) and (\ref{rejectrandom}). Here we take a two-step procedure to perform the simulation.

In the first step, we determine the parameter $k$ to satisfy that the probability of Type I error is controlled to below $\alpha=0.05$ under different sizes of sensor measurements, and scheduled transmission rate constraints. Given a scheduled transmission rate constraint $R$, define
\beq\label{dataset}
T_N&:=&  \frac{{ - (\theta _1^2 - \theta _0^2)}}{{2{\sigma ^2}}} \cdot \sum\limits_{i = 1}^N {{\gamma_{i}}}  + \frac{{({\theta _1} - {\theta _0})}}{{{\sigma ^2}}}\sum\limits_{i = 1}^N {{\gamma_{i}y_i}}\nonumber\\
&=&T_{N-1}-\left(\frac{\theta _1^2}{2}\gamma_N-{\theta_1}\gamma_Ny_N\right),
\enq
where $\gamma_i$ is given in (\ref{lrbscheduler}) and Theorem \ref{thm_optimal}. For each $N$, we independently generated $S=5000$ samples from a Gaussian random variable under $\cH_0$. Then, we recursively obtain a data set with size $S$ for each $N$, i.e. $\cT_N=\{T_N^i\}_{i=1}^S$ using  (\ref{dataset}), from which an empirical distribution of $T_N$ can be obtained, and denoted as $F_N(\cdot, \omega)$ with $\omega\in\Omega$. By the Glivenko-Cantelli theorem \cite{billingsley2009convergence}, it is known that the empirical distribution converges  almost surely to the true distribution of $T_N$.  This means that $F_N(\cdot, \omega)$ is a good approximation of the true distribution of $T_N$. Then, the corresponding testing parameter $k$ for each $N$ is selected as
$$k_N=\exp(\inf_{x}\{x|F_N(x,\omega)\geqslant 1-\alpha\}).$$

In the second step,  the parameters determined above are used to compute the probability of Type II error. Similarly, we independently generated $S=5000$ samples from a Gaussian random variable under $\cH_1$, and recursively obtain a data set $\cT_N$ with size $S$ for each $N$. By the N-P test,  we count the number of samples that accept $\cH_0$, and the probability of Type II error is approximately given by the ratio of the total number of samples accepting $\cH_0$ to $S$, i.e.
$$\beta_N=\frac{\#\{i|T_N^i~\text{accepts}~\cH_0, 1\leqslant i\leqslant S\}}{S},$$
where $\#A$ denotes the cardinality of set $A$.

Fig. \ref{NPsimulation1} illustrates the numerically computed probability of Type II error under a scheduled transmission rate constraint $R_{\theta}\leqslant0.5$ while Fig. \ref{NPsimulation2} shows the result under a scheduled transmission rate constraint $R_{\theta}\leqslant0.1$. It is observed from Fig. \ref{NPsimulation1} that the probability of Type II error under the LRB scheduler comes close to the standard N-P test using the full set of measurements, and they are indistinguishable for a reasonable size of sensor measurements. Under both scheduled communication rate constrains, it is evident that the probability of Type II error of the LRB scheduler is smaller than the random scheduler. This numerically demonstrates the effectiveness of the LRB scheduler, and supports our theoretical results.

\subsection{Detection Performance Under Cyber Attacks}
In this subsection, we examine the detection performance of N-P test when deceptive signals are injected into the channel.
Similar to the previous subsection, the probability of Type I error is fixed at $\alpha=0.05$ and the probability of Type II error is obtained via the two-step Monte Carlo method with a modification on the testing statistic $T_N$ as follows.
\beq\label{dataset_attack}
T'_N&:=&  \frac{{ - (\theta _1^2 - \theta _0^2)}}{{2{\sigma ^2}}} \cdot \sum\limits_{i = 1}^N {{\delta_i\cdot \gamma_{i}}}  + \frac{{({\theta _1} - {\theta _0})}}{{{\sigma ^2}}}\sum\limits_{i = 1}^N {{\delta_i\cdot \gamma_{i}y_i}}\nonumber\\
&=&T'_{N-1}-\left(\frac{\theta _1^2}{2}\delta_N\gamma_N-{\theta_1}\delta_N\gamma_Ny_N\right).
\enq

The deceptive signals are assumed to be generated from distribution $P_{\theta_1}$ according to the alternative hypothesis $\cH_1:\{\theta=\theta_1\}$ to fabricate a false alarm in the system. The SNR is fixed at $0$ dB. Low attack intensity of $P_i=0.1$,  medium attack intensity of $P_i=0.5$ and high attack intensity of $P_i=1.0$ are considered in the simulation.

\begin{figure*}
\centering
\subfigure[Probability of Type II error at low attack intensity $P_i=0.1$] {\includegraphics[width=2.35in]{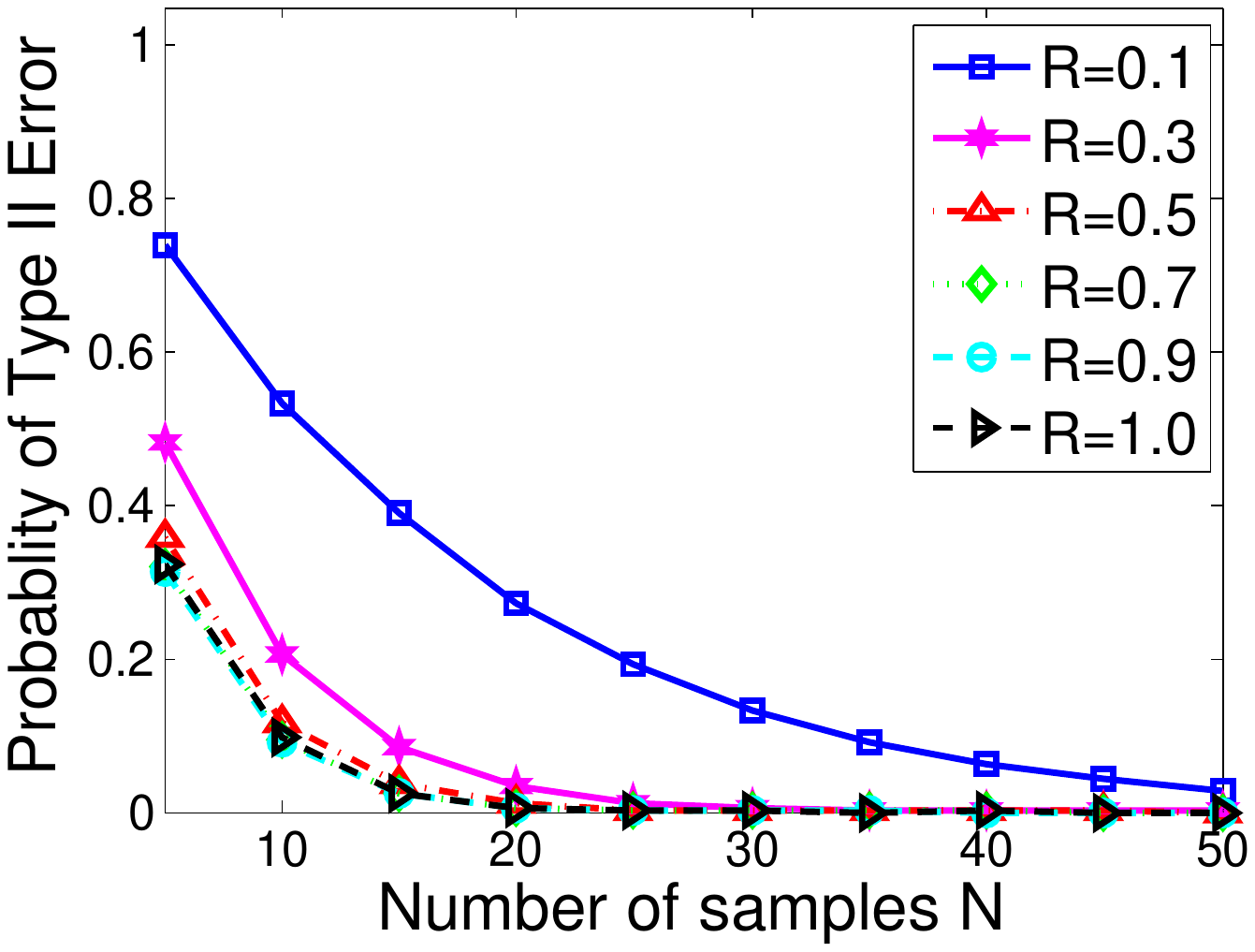}}
\subfigure[Probability of Type II error at medium attack intensity $P_i=0.5$] {\includegraphics[width=2.35in]{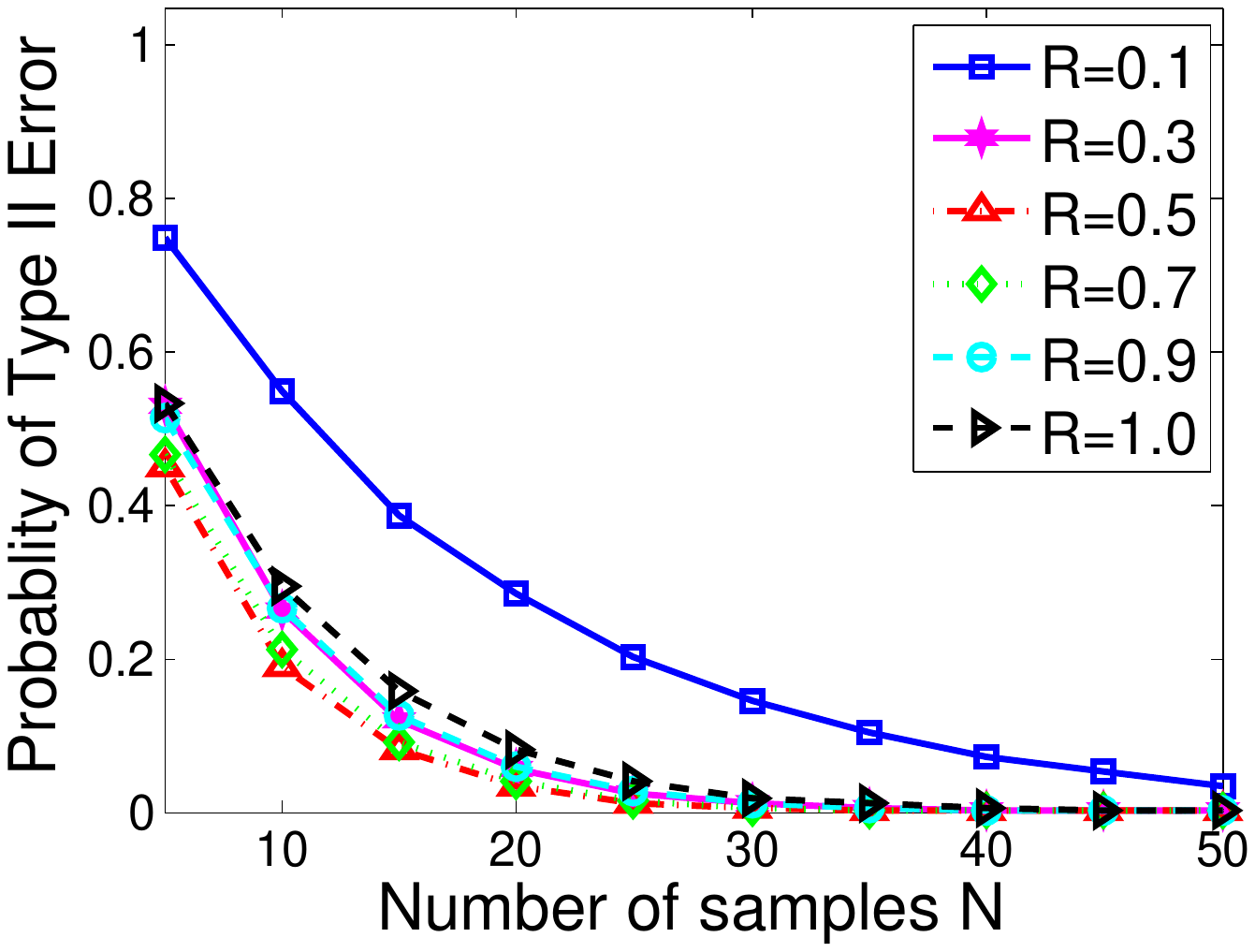}}
\subfigure[Probability of Type II error at high attack intensity $P_i=1.0$] {\includegraphics[width=2.35in]{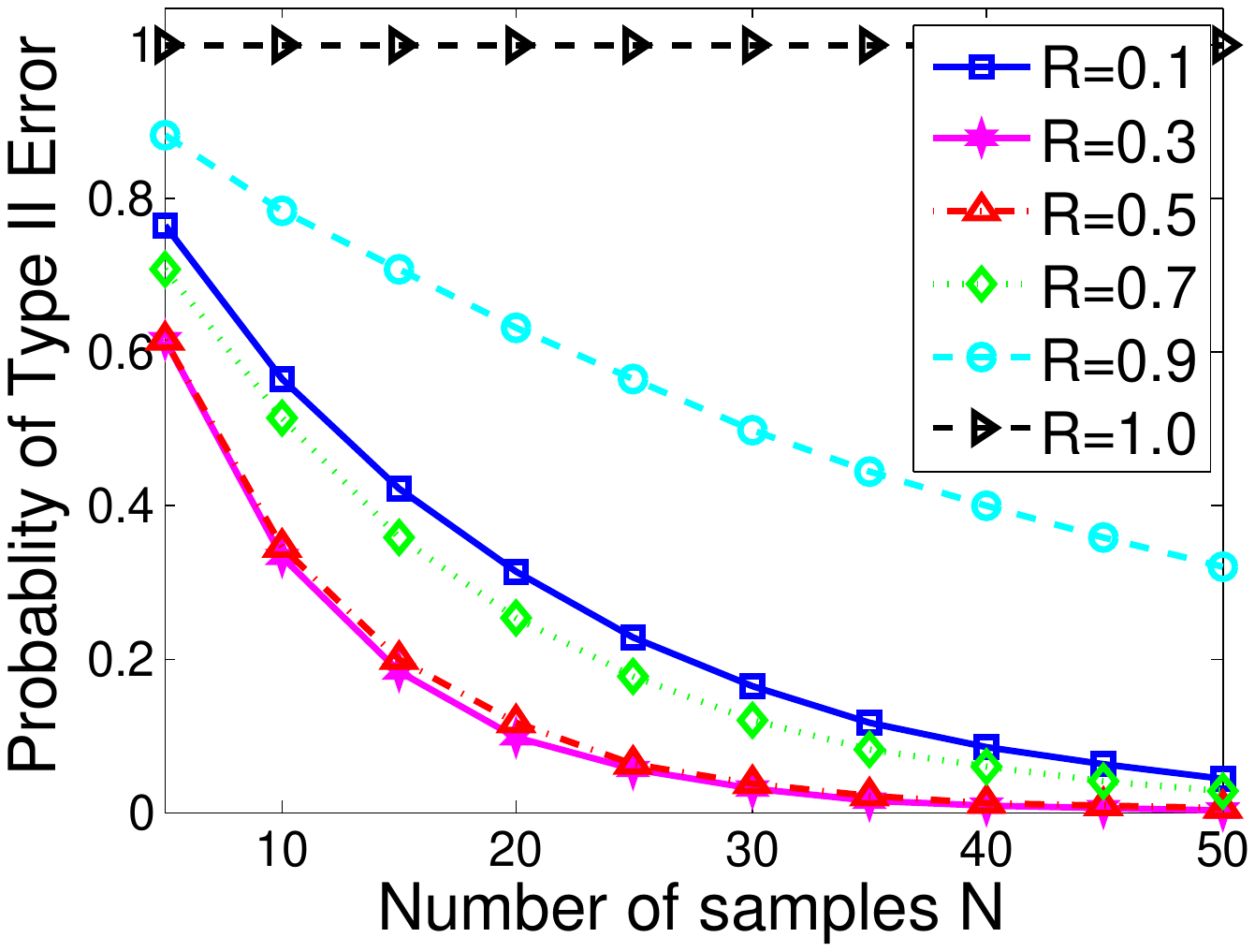}}
\caption{ Comparison of probability of Type II error under different attack intensities.}
\label{Attack_simulation}
\end{figure*}

Fig. \ref{Attack_simulation} depicts the numerically computed probability of Type II error when the system is being affected by cyber attacks. The detection performance under low, medium and high attack intensities are given by Fig. \ref{Attack_simulation} (a), (b), and (c) respectively. For a certain attack intensity, testing results under different levels of transmission rate constraints, e.g., $R\in\{0.1,0.3,0.5,0.7,0.9,1.0\}$ are illustrated in the figure.

It is clear from Fig \ref{Attack_simulation} (a) that a higher communication rate generally results in better detection performance under low attack intensities. For the medium and high attack intensities given in Fig. \ref{Attack_simulation} (b) and (c), however, the result is quite different. The best detection performance is obtained at $R=0.5$ and $R=0.3$ when the system is suffering cyber attacks with intensities $P_i=0.5$ and $P_i=1.0$ respectively. When the transmission rate approaches to 1, the testing performance deteriorates significantly. At high attack intensity ($P_i=1.0$), transmitting $90\%$ sensor measurements doesn't achieves a comparable performance to the case with merely $10\%$ measurements are transmitted. This scenario can be explained by the fact that a large portion of transmitted measurement are discarded at the tester since true sensor measurements become less and less discriminative among deceptive signals when the transmission rate increases.

\section{Conclusion}

Due to communication constraints and possible cyber attacks of a non-secure network, we have proposed a LRB scheduler for the binary detection problem.  The challenge of this testing framework lies in the nonlinearity of the scheduler.
Asymptotic analysis results demonstrate that the LRB scheduler for a secure network shares a comparable detection performance to the standard one using the full set of sensor measurements, even under a moderate communication cost. Meanwhile,  it achieves a strictly better exponent of  Type II probability of error than the random scheduler. Under cyber attacks by injecting deceptive signals to the network, the LRB scheduler can also maintain its detection functionality. The degradation of detection performance under different attack intensities was exactly quantified by means of asymptotic analysis.


Our future work is to extend the idea of the LRB scheduler to the multiple hypotheses case, i.e., $\theta\in\{\theta_0,\ldots,\theta_{M-1}\}$ and $M>2$. However, this entails a new challenge in selecting informative measurements, and conducting the asymptotic analysis. Another important future research is to study the vector measurement. There are two possible approaches for this generalization.  One is to sequentially decide the transmission of each element of the measurement vector.  The other is to compress the measurement vector into one dimensional scalar. Although these extensions will be interesting, they are beyond the scope of this paper and left to our future work.

\appendices
\section{Proof of Theorem \ref{thm_optimal}}
{\em Proof:}~Recall that $z_{i}$ is an i.i.d. random sequence with the pdf given by
\begin{equation*}
{p_{{\theta }}}(z) = {\left[ {{p_{T}}(y|\theta,\sigma ,a,b)} \right]^\gamma }{\left[ {P_{
\theta}\{{\gamma} = 0\}} \right]^{1 - \gamma }},
\end{equation*}
where $ \theta\in\{\theta_0, \theta_1\}$ and $\gamma\in\{0, 1\}$.
The relative entropy of the two distributions  from $\cH_0$ to $\cH_1$ under the LRB scheduling scheduler is computed as follows
\begin{equation*}
\begin{split}
&D({P_{\theta_0}}\lVert{P_{\theta_1}})\\
&=\int_{\{\gamma  = 1\}} {\ln\frac{{{p_{{\theta _0}}}(z)}}{{{p_{{\theta _1}}}(z)}}d{F_{{\theta _0}}}(z)}  + \int_{\{\gamma  = 0\}} {\ln\frac{{{p_{{\theta _0}}}(z)}}{{{p_{{\theta _1}}}(z)}}d{F_{{\theta _0}}}(z)} \\
&=\int_{\{\gamma  = 1\}} {{p_T}(y|{\theta _0},\sigma ,a,b)\ln\frac{{{p_T}(y|{\theta _0},\sigma ,a,b)}}{{{p_T}(y|{\theta _1},\sigma ,a,b)}}dy} \\
&\quad+ \int_{\{\gamma  = 0\}} {\ln\frac{{{P_{{\theta _0}}}\{\gamma  = 0\}}}{{{P_{{\theta _1}}}\{\gamma  = 0\}}}d{F_{{\theta _0}}}(y)}\\
&= \int_{( - \infty ,a] \cup [b,\infty )}^{} {{p_{{\theta _0}}}(y)\ln \frac{{\exp [ - {{(y - {\theta _0})}^2}/2{\sigma ^2}]}}{{\exp [ - {{(y - {\theta _1})}^2}/2{\sigma ^2}]}}}dy\\
&\quad + \ln\frac{{{P_{{\theta _0}}}\{\gamma  = 0\}}}{{{P_{{\theta _1}}}\{\gamma  = 0\}}} \cdot \int_a^b {{p_{{\theta _0}}}(y)} dy\\
&=\frac{{{P_{{\theta _0}}}\{\gamma  = 1\}\cdot(\theta _1^2 - \theta _0^2)}}{{2{\sigma ^2}}} +P_{{\theta _0}}\{\gamma=0\}\cdot\ln\frac{{{P_{{\theta _0}}}\{\gamma  = 0\}}}{{{P_{{\theta _1}}}\{\gamma  = 0\}}}\\
& \quad- \frac{{{\theta _1} - {\theta _0}}}{{{\sigma ^2}}}\left( {\int_{( - \infty ,a] \cup [b,\infty )}^{} {y{p_{\theta_{0}}}(y)dy} } \right).\\
\end{split}
\end{equation*}
From (\ref{Sumofab}), we obtain
\begin{equation}
\begin{split}
{P_{{\theta _0}}}\{\gamma  = 0\} &= \int_{\frac{{{a} - {\theta _0}}}{\sigma }}^{\frac{{{b} - {\theta _0}}}{\sigma }} {\varphi (t)dt}  = \int_{\frac{{{\theta _1} - {b}}}{\sigma }}^{\frac{{{\theta _1} - {a}}}{\sigma }} {\varphi (t)dt} \\
&= \int_{\frac{{{\theta _1} - {a}}}{\sigma }}^{\frac{{{\theta _1} - {b}}}{\sigma }} {\varphi ( - t)d( - t)}  = \int_{\frac{{{a} - {\theta _1}}}{\sigma }}^{\frac{{{b} - {\theta _1}}}{\sigma }} {\varphi (t)dt} \\
& = {P_{{\theta _1}}}\{\gamma  = 0\}.
\end{split}
\end{equation}
Then, it follows that
\begin{equation}
\label{Lab}
\begin{split}
L_1(a,b)&=D({P_{\theta_0}}\lVert{P_{\theta_1}})\\
& =  - \frac{{{\theta _1} - {\theta _0}}}{{{\sigma ^2}}}\left( {\int_{ - \infty }^a {y{p_{{\theta _0}}}(y)dy}  + \int_b^\infty  {y{p_{{\theta _0}}}(y)dy} } \right)\\
& \quad+ \frac{{(\theta _1^2 - \theta _0^2)}}{{2{\sigma ^2}}}{P_{{\theta _0}}}\{\gamma  = 1\}.
\end{split}
\end{equation}
Note that $a+b=\theta_0+\theta_1$ and $b>a$, we obtain
\begin{equation*}
|b-\theta_{0}|>|a-\theta_{0}|.
\end{equation*}
Therefore, it yields that
\begin{equation}
\begin{split}
&\frac{{\partial L_1(a,b)}}{{\partial a}}=  - \frac{{{\theta _1} - {\theta _0}}}{{{\sigma ^2}}}\left( {a{p_{{\theta _0}}}(a) + b{p_{{\theta _0}}}(b)} \right)\\
&\quad\quad\quad\quad\quad\quad\quad\quad\quad+ \frac{{(\theta _1^2 - \theta _0^2)}}{{2{\sigma ^2}}}\left( {{p_{{\theta _0}}}(a) + {p_{{\theta _0}}}(b)} \right)\\
&= \frac{{({\theta _1} - {\theta _0})(b - a)}}{{2{\sigma ^2}}} \cdot ({p_{{\theta _0}}}(a) - {p_{{\theta _0}}}(b))\\
&= \frac{{({\theta _1} - {\theta _0})(b - a)}}{{2{\sigma ^3}}} \cdot \left( {\varphi \left( {\frac{{a - {\theta _0}}}{\sigma }} \right) - \varphi \left( {\frac{{b - {\theta _0}}}{\sigma }} \right)} \right)\\
&>0,
\end{split}
\end{equation}
which implies that $L_1(a,b)$ is an increasing function of $a$.

Considering the constraints of the problem $R_{\theta}\leqslant R$, it follows that $a\leqslant a^*$. Therefore, $L_1(a,b)$ is maximized when $a=a^{*}$ and $b=b^{*}$.

By (\ref{Lab}), we obtain that:
\beq
L_{1}(a^{*},b^{*}) &=& \frac{{(\theta _1^2 - \theta _0^2)R}}{{2{\sigma ^2}}} - \frac{{{\theta _1} - {\theta _0}}}{{{\sigma ^2}}}\nonumber\\
&&\times\int_{( - \infty ,a^{*}] \cup [b^{*},\infty )}^{} {y{p_{{\theta _0}}}(y)}dy\nonumber.
\enq
The proof is completed.\qed

\bibliographystyle{IEEEtran}        
\bibliography{mybibf}           

\begin{thebibliography}{10}
\providecommand{\url}[1]{#1}
\csname url@samestyle\endcsname
\providecommand{\newblock}{\relax}
\providecommand{\bibinfo}[2]{#2}
\providecommand{\BIBentrySTDinterwordspacing}{\spaceskip=0pt\relax}
\providecommand{\BIBentryALTinterwordstretchfactor}{4}
\providecommand{\BIBentryALTinterwordspacing}{\spaceskip=\fontdimen2\font plus
\BIBentryALTinterwordstretchfactor\fontdimen3\font minus
  \fontdimen4\font\relax}
\providecommand{\BIBforeignlanguage}[2]{{%
\expandafter\ifx\csname l@#1\endcsname\relax
\typeout{** WARNING: IEEEtran.bst: No hyphenation pattern has been}%
\typeout{** loaded for the language `#1'. Using the pattern for}%
\typeout{** the default language instead.}%
\else
\language=\csname l@#1\endcsname
\fi
#2}}
\providecommand{\BIBdecl}{\relax}
\BIBdecl

\bibitem{tenney1981detection}
R.~R. Tenney and N.~R. Sandell, ``Detection with distributed sensors,''
  \emph{IEEE Transactions on Aerospace and Electronic Systems}, no.~4, pp.
  501--510, 1981.

\bibitem{alrajeh2013intrusion}
N.~A. Alrajeh, S.~Khan, and B.~Shams, ``Intrusion detection systems in wireless
  sensor networks: a review,'' \emph{International Journal of Distributed
  Sensor Networks}, 2013.

\bibitem{bokareva2006wireless}
T.~Bokareva, W.~Hu, S.~Kanhere, B.~Ristic, N.~Gordon, T.~Bessell, M.~Rutten,
  and S.~Jha, ``Wireless sensor networks for battlefield surveillance,'' in
  \emph{Proceedings of the land warfare conference}, 2006, pp. 1--8.

\bibitem{li2013jamming}
Y.~Li, L.~Shi, P.~Cheng, J.~Chen, and D.~E. Quevedo, ``Jamming attack on
  cyber-physical systems: A game-theoretic approach,'' in \emph{IEEE 3rd Annual
  International Conference on Cyber Technology in Automation, Control and
  Intelligent Systems (CYBER)}, 2013, pp. 252--257.

\bibitem{yu2006detecting}
B.~Yu and B.~Xiao, ``Detecting selective forwarding attacks in wireless sensor
  networks,'' in \emph{20th International Parallel and Distributed Processing
  Symposium}.\hskip 1em plus 0.5em minus 0.4em\relax IEEE, 2006, pp. 8--pp.

\bibitem{he2013sats}
J.~He, P.~Cheng, L.~Shi, and J.~Chen, ``Sats: Secure average-consensus-based
  time synchronization in wireless sensor networks,'' \emph{IEEE Transactions
  on Signal Processing}, vol.~61, no.~24, pp. 6387--6400, 2013.

\bibitem{zhu2004interleaved}
S.~Zhu, S.~Setia, S.~Jajodia, and P.~Ning, ``An interleaved hop-by-hop
  authentication scheme for filtering of injected false data in sensor
  networks,'' in \emph{IEEE Symposium on Security and Privacy}, 2004, pp.
  259--271.

\bibitem{lu2005applying}
B.~Lu, T.~G. Habetler, R.~G. Harley, J.~Guti{\'e}rrez \emph{et~al.}, ``Applying
  wireless sensor networks in industrial plant energy management systems. part
  i. a closed-loop scheme,'' in \emph{Sensors, IEEE}, 2005, pp. 6--pp.

\bibitem{perrig2002spins}
A.~Perrig, R.~Szewczyk, J.~D. Tygar, V.~Wen, and D.~E. Culler, ``Spins:
  Security protocols for sensor networks,'' \emph{Wireless Networks}, vol.~8,
  no.~5, pp. 521--534, 2002.

\bibitem{lu2012becan}
R.~Lu, X.~Lin, H.~Zhu, X.~Liang, and X.~Shen, ``{BECAN}: a bandwidth-efficient
  cooperative authentication scheme for filtering injected false data in
  wireless sensor networks,'' \emph{IEEE Transactions on Parallel and
  Distributed Systems}, vol.~23, no.~1, pp. 32--43, 2012.

\bibitem{du2005pairwise}
W.~Du, J.~Deng, Y.~S. Han, P.~K. Varshney, J.~Katz, and A.~Khalili, ``A
  pairwise key predistribution scheme for wireless sensor networks,'' \emph{ACM
  Transactions on Information and System Security (TISSEC)}, vol.~8, no.~2, pp.
  228--258, 2005.

\bibitem{boneh2003identity}
D.~Boneh and M.~Franklin, ``Identity-based encryption from the weil pairing,''
  \emph{SIAM Journal on Computing}, vol.~32, no.~3, pp. 586--615, 2003.

\bibitem{ozdemir2009secure}
S.~Ozdemir and Y.~Xiao, ``Secure data aggregation in wireless sensor networks:
  A comprehensive overview,'' \emph{Computer Networks}, vol.~53, no.~12, pp.
  2022--2037, 2009.

\bibitem{he2007pda}
W.~He, X.~Liu, H.~Nguyen, K.~Nahrstedt, and T.~Abdelzaher, ``Pda:
  Privacy-preserving data aggregation in wireless sensor networks,'' in
  \emph{INFOCOM 2007. 26th IEEE International Conference on Computer
  Communications.}, 2007, pp. 2045--2053.

\bibitem{zhou1999securing}
L.~Zhou and Z.~J. Haas, ``Securing {Ad} hoc networks,'' \emph{Network, IEEE},
  vol.~13, no.~6, pp. 24--30, 1999.

\bibitem{karlof2003secure}
C.~Karlof and D.~Wagner, ``Secure routing in wireless sensor networks: Attacks
  and countermeasures,'' \emph{Ad hoc networks}, vol.~1, no.~2, pp. 293--315,
  2003.

\bibitem{cardenas2008secure}
A.~A. Cardenas, S.~Amin, and S.~Sastry, ``Secure control: Towards survivable
  cyber-physical systems,'' in \emph{the 28th International Conference on
  Distributed Computing Systems Workshops}.\hskip 1em plus 0.5em minus
  0.4em\relax IEEE, 2008, pp. 495--500.

\bibitem{casella2001statistical}
G.~Casella and R.~L. Berger, \emph{Statistical Inference}.\hskip 1em plus 0.5em
  minus 0.4em\relax Cengage Learning, 2001.

\bibitem{you2013asymptotically}
K.~You, L.~Xie, and S.~Song, ``Asymptotically optimal parameter estimation with
  scheduled measurements,'' \emph{IEEE Transactions on Signal Processing},
  vol.~61, no.~14, pp. 3521--3531, 2013.

\bibitem{rago1996censoring}
C.~Rago, P.~Willett, and Y.~Bar-Shalom, ``Censoring sensors: A
  low-communication-rate scheme for distributed detection,'' \emph{IEEE
  Transactions on Aerospace and Electronic Systems}, vol.~32, no.~2, pp.
  554--568, 1996.

\bibitem{appadwedula2005energy}
S.~Appadwedula, V.~V. Veeravalli, and D.~L. Jones, ``Energy-efficient detection
  in sensor networks,'' \emph{IEEE Journal on Selected Areas in
  Communications}, vol.~23, no.~4, pp. 693--702, 2005.

\bibitem{appadwedula2008decentralized}
------, ``Decentralized detection with censoring sensors,'' \emph{IEEE
  Transactions on Signal Processing}, vol.~56, no.~4, pp. 1362--1373, 2008.

\bibitem{maleki2011energy}
S.~Maleki, A.~Pandharipande, and G.~Leus, ``Energy-efficient distributed
  spectrum sensing for cognitive sensor networks,'' \emph{Sensors Journal,
  IEEE}, vol.~11, no.~3, pp. 565--573, 2011.

\bibitem{ding2014likelihood}
J.~Ding, K.~You, S.~Song, and C.~Wu, ``Likelihood ratio based communication for
  distributed detection,'' in \emph{11th IEEE International Conference on
  Control \& Automation (ICCA)}, 2014, pp. 1204--1209.

\bibitem{shi2011optimal}
L.~Shi, P.~Cheng, and J.~Chen, ``Optimal periodic sensor scheduling with
  limited resources,'' \emph{IEEE Transactions on Automatic Control}, vol.~56,
  no.~9, pp. 2190--2195, 2011.

\bibitem{shi2011sensor}
------, ``Sensor data scheduling for optimal state estimation with
  communication energy constraint,'' \emph{Automatica}, vol.~47, no.~8, pp.
  1693--1698, 2011.

\bibitem{wu2013event}
J.~Wu, Q.-S. Jia, K.~H. Johansson, and L.~Shi, ``Event-based sensor data
  scheduling: Trade-off between communication rate and estimation quality,''
  \emph{IEEE Transactions on Automatic Control}, vol.~58, no.~4, pp.
  1041--1046, 2013.

\bibitem{you2013kalman}
K.~You and L.~Xie, ``Kalman filtering with scheduled measurements,'' \emph{IEEE
  Transactions on Signal Processing}, vol.~61, no.~6, pp. 1520--1530, 2013.

\bibitem{cover2006eit}
T.~Cover and J.~Thomas, \emph{{Elements of Information Theory}}.\hskip 1em plus
  0.5em minus 0.4em\relax Wiley-Interscience, 2006.

\bibitem{ash2000pam}
R.~Ash and C.~Dol{\'e}ans-Dade, \emph{{Probability and Measure Theory}}.\hskip
  1em plus 0.5em minus 0.4em\relax Academic Press, 2000.

\bibitem{billingsley2009convergence}
P.~Billingsley, \emph{Convergence of probability measures}.\hskip 1em plus
  0.5em minus 0.4em\relax Wiley-Interscience, 2009.

\end{thebibliography}

\end{document}